\def\be{\begin{equation}}
\def\ee{\end{equation}}
\def\bea{\begin{eqnarray}}
\def\eea{\end{eqnarray}}
\def\ben{\begin{equation*}}
\def\een{\end{equation*}}
\def\bean{\begin{eqnarray*}}
\def\eean{\end{eqnarray*}}
\def\bma{\begin{mathletters}}
\def\ema{\end{mathletters}}
\def\bi{\begin{itemize}}
\def\ei{\end{itemize}}
\newtheorem{defn}{Definition}
\newtheorem{thm}{Theorem}
\newtheorem{cor}[thm]{Corollary}
\newtheorem{lem}{Lemma}
\newtheorem{example}{Example}
\newcommand{\pk}[3]{\langle #1 \,  | \, #2 \, | #3  \rangle}
\newcommand{\ket}[1]{ | \, #1  \rangle}
\newcommand{\bra}[1]{ \langle #1 \,  |}
\newcommand{\proj}[1]{\ket{#1}\bra{#1}}
\newcommand{\ketbra}[2]{|{#1}\rangle \langle {#2}|}
\newcommand{\braket}[2]{\langle #1 | #2 \rangle}
\newcommand{\C}[1]{\mathbb{C}^{#1}\otimes\mathbb{C}^{#1}}
\newcommand{\s}[1]{\scriptscriptstyle #1}
\begin{document}

\title{Necessary Condition for Local Distinguishability of Maximally Entangled States: Beyond Orthogonality Preservation}
\author{Tanmay Singal}
\email{tanmaysingal@gmail.com}
\affiliation{Department of Applied Mathematics, Hanyang University (ERICA),
55 Hanyangdaehak-ro, Ansan, Gyeonggi-do 426-791, Korea}
\author{Ramij Rahaman}
\email{ramijrahaman@gmail.com}
\affiliation{Department of Mathematics, University of Allahabad, Allahabad 211002, U.P., India}
\author{Sibasish Ghosh}
\email{sibasish@imsc.res.in}
\affiliation{Optics \& Quantum Information Group, The Institute of Mathematical Sciences, HBNI, CIT Campus, Taramani, Chennai, 600 113, India}
\author{Guruprasad Kar}
\email{gkar@isical.ac.in}
\affiliation{Physics \& Applied Mathematics Unit, Indian Statistical Institute, 203 B. T. Road, Kolkata - 700108, India}

\begin{abstract}
The (im)possibility of local distinguishability of orthogonal multipartite quantum states still remains an intriguing question. Beyond $\C{3}$, the problem remains unsolved even for maximally entangled states (MES). So far, the only known condition for the local distinguishability of states is the well-known orthogonality preservation (OP). Using an upper bound on the locally accessible information for bipartite states, we derive a very simple necessary condition for any set of pairwise orthogonal MES in $\C{d}$ to be perfectly locally distinguishable. This condition is seen to be stronger than the OP condition. This is particularly so for any set of $d$ number of pairwise orthogonal MES in $\C{d}$. When testing this condition for the local distinguishability of all sets of four generalized Bell states in $\C{4}$, we find that it is not only necessary but also sufficient to determine their local distinguishability. This demonstrates that the aforementioned upper-bound may play a significant role in the general scenario of local distinguishability of bipartite states.
\end{abstract}

\keywords{local distinguishability, LOCC, maximally entangled states, locally accessible information}

\pacs{03.67.Hk, 03.67.Mn}

\maketitle

\section{Introduction}

Various quantum information processing tasks are often restricted by local operations and classical communication (LOCC) protocols. Such LOCC protocols are natural requirements when two or more physically separated parties have to accomplish a quantum information processing task. Local discrimination of pairwise orthogonal quantum states is one such task. This topic in quantum information theory has received considerable attention in recent years to understand more deeply the role of entanglement and non-locality in quantum information processing. \medbreak

When restricted to LOCC, two distant parties Alice and Bob generally cannot distinguish among bipartite states as efficiently as they can with global operations. This is also true when the states are pairwise orthogonal. As is well-known, pairwise orthogonal states can be perfectly distinguished by global operations. Such a projective measurement is generally a global operation, and cannot be achieved through LOCC protocols. Hence generally, pairwise orthogonal bipartite states aren't perfectly distinguishable by LOCC.
\medbreak

Walgate et al. \cite{W00} showed that by using LOCC only, any two pure orthogonal multipartite states can be perfectly distinguished. The local indistinguishability of pairwise orthogonal multipartite states is a signature of the non-locality expressed by these states. Since entanglement is deeply connected with non-locality one would be well placed to assume that pairwise orthogonal product states would be perfectly distinguishable by LOCC. This assumption, however, is wrong, for instance, Bennett et al. \cite{BD99} showed that there exist a set of nine unentangled pure orthogonal states in $\C{3}$ which cannot be perfectly distinguished by LOCC. Therefore, the local distinguishability problem is not so simple, particularly when the number of states is more than two.

Fan \cite{F04} showed that when $d$ is a prime number and $l$ is a positive integer such that $l(l-1) < 2 d$, then any $l$ number of mutually orthogonal generalized Bell states in $\C{d}$ (described in the teleportation paper of Bennet et al \cite{B93} and defined in definition \eqref{def1} of the present paper) are perfectly distinguishable by LOCC itself. The question of perfect local discrimination of pairwise orthogonal generalized Bell states in $\C{d}$ was later raised by Ghosh et al. \cite{G04} for \emph{general} $d$. They showed that if $k > d$, \emph{no} set of $k$ number of generalized Bell states in $\C{d}$ can be perfectly distinguished by LOCC. In the context of \emph{general} maximally entangled states (MES) in $\C{d}$, it is known that no set of $k>d$ pairwise orthogonal MES in $\C{d}$ are perfectly distinguishable by LOCC \cite{N05}. Moreoever, as a general result, it has been shown in \cite{N05} that \emph{any three} pairwise orthogonal MES in $\C{3}$ are perfectly distinguishable by LOCC. Yu et al. \cite{Y12} provided the first example of a set of four pairwise orthogonal MES in $\C{4}$ (of the form $\frac{1}{2}\sum_{i,j=0}^{1} \ket{ij}\otimes (\sigma_\alpha \ket{i} \otimes \sigma_\beta \ket{j})$, where $\sigma_\alpha,\sigma_\beta \in \{ \mathbb{1}_2, \sigma_x, \sigma_y, \sigma_z\}$) which are not perfectly distinguisable by LOCC. In fact, they showed that these four states are not distinguisable when the POVMs are restricted to PPT-POVM type, i.e., each effect of the POVM is PPT. PPT-POVMs are more general than LOCC.  Imperfect distinguishability of these states was later discussed by Cosentino \cite{C13}. In \cite{N13}, Nathanson gives an example of \emph{three} pairwise orthogonal MES in $\C{4}$ which are not perfectly distinguishable by one-way LOCC, but are so by two-way LOCC.

In most of the aforementioned results, the sets of states exhibited a certain symmetry which was used to conclude whether the states were locally distinguishable or not. For general sets of orthogonal bipartite states, the only known condition for their local distinguishability is the \emph{existence} of  \emph{orthogonality preserving} (OP) measurements on Alice's and Bob's sides. A measurement is OP if, for any measurement outcome, the corresponding post-measurement selected states remain pairwise orthogonal. The existence of OP measurements is a necessary (but not sufficient) condition for local distinguishability because for the post-measurement selected states to remain distinguishable it is required that they be pairwise orthogonal.

Locally accessible information of a set of states is the amount of classical information one can retrieve from the states using LOCC. For any $m$ quantum states to be locally distinguishable, it is necessary that the locally accessible information of these $m$ states should be at least $log_2 m$ bits. When performing a measurement on the states, it is possible that the locally accessible information of the post-measurement selected states decreases below $log_2 m$ bits. In such a case, the states become locally indistinguishable. Thus, the condition to maintain the locally accessible information to at least $log_2m$ bits, is a constraint on the measurements, just like OP is a constraint on the measurements. In fact this condition is at least as strong as the OP condition because it subsumes the OP condition.

In this paper we use the Holevo-like upper bound on locally accessible information \cite{B03} to give a necessary condition for the local distinguishability of MES. In the LOCC protocol for the local distinguishability of MES, when this upper bound (for the post-measurement selected states) becomes smaller than $log_2 m$ bits, then said states are locally indistinguishable. To show that our necessary condition is indeed stronger than OP, we show that there exist some sets of four generalized Bell states in $\C{4}$, for which the OP condition isn't strong enough to conclusively determined the states' local indistinguishability whereas our condition is strong enough to do so. We also analyze the local (in)distinguishability of all sets of four generalized Bell states in $\C{4}$. For this, we first partition $\binom{16}{4}$ such distinct sets into $122$ equivalence classes, where all sets within an equivalence class are related to each other by a local unitary transformation (see Appendix \ref{122} for the details). We represent each equivalence class by a constituent set of (generalized Bell) states. We then use our necessary condition to isolate those sets which fail the necessary condition test which means that these sets of states are locally indistinguishable. Of the total $122$, such sets are $39$ in number. Surprisingly, we find that states in each of the remaining $83$ sets are distinguishable by one-way LOCC. This shows that our condition is not only necessary but also sufficient to determine the local distinguishability of four generalized Bell states in $\C{4}$. This signifies the role of the Holevo-like upper upper for locally accessible information in the context of local distinguishability of quantum states.

The paper is divided into the following sections: in section \ref{secneccond} we obtain the necessary condition for local distinguishability, in section \ref{secensgen} we provide an example of how to use the necessary condition. This also demonstrates that the OP condition is weaker than the necessary condition we derived in section \ref{secneccond}. Also, after the example, we apply the necessary condition for the local distinguishability of \emph{all} sets of four generalized Bell states in $\C{4}$. Section \ref{conclusion} concludes the paper with future directions. In Appendix \ref{122}, describe how to find out the $122$ equivalence classes of sets of four generalized Bell states in $\C{4}$. In Appendix \ref{app} we list the LOCC protocols used to distinguish sets of four generalized Bell states (in $\C{4}$), which satisfy the necessary condition. 

\section{Necessary Condition for Local Distinguishability of Ensemble of Maximally Entangled States}
\label{secneccond}

Consider a set of $m$ pairwise orthogonal MES $ \ket{\psi_1}$, $\ket{\psi_2}$, $\cdots$, $\ket{\psi_m}$ $\in$ $\C{d}$. Let Alice control one subsystem and Bob the other. Let $\rho^{(A)}_i = Tr_B(\ketbra{\psi_i}{\psi_i})$ and $\rho^{(B)}_i = Tr_A(\ketbra{\psi_i}{\psi_i})$ be the $i$-th reduced state on Alice's subsystem and Bob's subsystem respectively. Let Alice start the LOCC protocol with some measurement, whose Kraus operators are $\{ K_i \}_{i=1}^{n}$, where $n \ge 2$ and $\sum_{i=1}^{n} K_i^\dag K_i= \mathbb{1}_A$, where $\mathbb{1}_A$ is the identity operator acting on Alice's subsystem. Let the measurement yield the $\alpha$-th outcome. Thus the post-measurement state is given by
\begin{equation}
\label{PMS}
\ket{\psi_{i}} \longrightarrow \ket{\psi_{\s{i,\alpha}}} = \frac{K_\alpha\otimes \mathbb{1}_B}{\sqrt{\pk{\psi_i}{K_\alpha^\dag K_\alpha \otimes \mathbb{1}_B}{\psi_i}}} \ket{\psi_i},
\end{equation}
for all $1 \leq i \leq m$, and where $\mathbb{1}_B$ is the identity operator acting on Bob's subsystem. Let $\rho^{(A)}_{\s{i,\alpha}} \equiv Tr_B \left( \proj{\psi_{i,\alpha}}\right)$, and $\rho^{(B)}_{\s{i,\alpha}}\equiv Tr_A \left( \proj{\psi_{i,\alpha}} \right)$ be the post measurement reduced states (PMRS) on Alice's and Bob's sides respectively. Then the average PMRS on Alice's and Bob's sides are $\rho^{(A)}_{\s{\alpha}} = \sum_{i=1}^{m} \frac{1}{m} \rho^{(A)}_{\s{i,\alpha}}$ and $\rho^{(B)}_{\s{\alpha}} = \sum_{i=1}^{m} \frac{1}{m} \rho^{(B)}_{\s{i,\alpha}}$, respectively, where the $\frac{1}{m}$ factor denotes the probability which with each state appears in the ensemble.

\begin{lem}
\label{lem2}
If Alice starts the measurement protocol to distinguish $m$ MES by LOCC, the post measurement reduced states (PMRS) on her side are completely indistinguishable.
\end{lem}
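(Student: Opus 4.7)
The plan is to exploit the defining property of maximally entangled states: for every $i$, the reduced state $\rho^{(A)}_i = \text{Tr}_B(\ketbra{\psi_i}{\psi_i})$ equals the maximally mixed state $\mathbb{1}_A/d$. This single identity should collapse all the $i$-dependence out of the PMRS on Alice's side after she applies the Kraus operator $K_\alpha$.

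First, I would write down $\rho^{(A)}_{i,\alpha}$ directly from the definition in \eqref{PMS}. Taking the partial trace over Bob gives
\begin{equation*}
\rho^{(A)}_{i,\alpha} = \frac{K_\alpha \, \rho^{(A)}_i \, K_\alpha^\dag}{\pk{\psi_i}{K_\alpha^\dag K_\alpha \otimes \mathbb{1}_B}{\psi_i}}.
\end{equation*}
Then I would substitute $\rho^{(A)}_i = \mathbb{1}_A/d$ into both the numerator and the denominator. The numerator becomes $K_\alpha K_\alpha^\dag/d$, and the denominator reduces via the cyclicity of the trace to $\text{Tr}(K_\alpha^\dag K_\alpha)/d = \text{Tr}(K_\alpha K_\alpha^\dag)/d$. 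The factors of $d$ cancel, leaving
\begin{equation*}
\rho^{(A)}_{i,\alpha} = \frac{K_\alpha K_\alpha^\dag}{\text{Tr}(K_\alpha K_\alpha^\dag)},
\end{equation*}
which is manifestly independent of $i$.

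From here the conclusion is immediate: since $\rho^{(A)}_{1,\alpha} = \rho^{(A)}_{2,\alpha} = \cdots = \rho^{(A)}_{m,\alpha}$, the averaged PMRS $\rho^{(A)}_\alpha$ coincides with each individual $\rho^{(A)}_{i,\alpha}$, so no further measurement on Alice's system — and in particular no refinement via LOCC performed locally on her side — can retrieve any information about $i$. Thus the PMRS on Alice's side are completely indistinguishable.

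I do not foresee a genuine obstacle here; the only subtlety is being careful that the probability denominator is written in a form that makes the use of the maximally-mixed-marginal property transparent, which is exactly why I would rewrite $\pk{\psi_i}{K_\alpha^\dag K_\alpha \otimes \mathbb{1}_B}{\psi_i}$ as $\text{Tr}(K_\alpha^\dag K_\alpha \rho^{(A)}_i)$ before substituting. The lemma is really a one-line identity once the MES property is invoked, and its strength lies in what it implies downstream — namely, that all distinguishing information surviving Alice's measurement must be encoded in Bob's side, which is presumably where the Holevo-like bound will subsequently be applied.
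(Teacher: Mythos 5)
Your proof is correct and follows essentially the same route as the paper: both use the fact that each $\rho^{(A)}_i = \mathbb{1}_A/d$, so that after the measurement $\rho^{(A)}_{\s{i,\alpha}} \propto K_\alpha K_\alpha^\dag$ independently of $i$, which is exactly the paper's argument with the normalization made explicit. No gap.
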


\begin{proof}
Since $\{\ket{\psi_i}\}_{i=1}^{m}$ are MES, the corresponding reduced states on Alice's subsystem are maximally mixed, i.e., $\rho^{(A)}_i = \frac{1}{d}\mathbb{1}_A$. As a result of the measurement, the states on Alice's subsystem transform as $\rho^{(A)}_i$ $=$ $\dfrac{1}{d}$ $\mathbb{1}_A$ $\longrightarrow$ $ \rho^{(A)}_{\s{i,\alpha}}$ $\propto$ $ K_\alpha K_\alpha^\dag$, $\forall \; 1 \leq i \leq m$. This implies that (even) after the first measurement, the PMRS on Alice's side are completely indistinguishable.
\end{proof}

For the post-measurement joint states $\{ \ket{\psi_{\s{i,\alpha}}} \}_{i=1}^{m}$ to still be distinguishable, the indistinguishability of PMRS on Alice's side imposes constraints on the average PMRS on Bob's side. This is made clear in theorem \ref{thm1}.

\begin{thm}
\label{thm1}
If the PMRS on Alice's side are completely indistinguishable, the von Neuman entropy of the average PMRS on Bob's side has to be at least $log_2 m \; bit$ for the states to be perfectly distinguishable by LOCC.
\end{thm}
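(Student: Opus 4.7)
The plan is to exploit Lemma \ref{lem2} together with the Holevo bound. By Lemma \ref{lem2}, every PMRS on Alice's side coincides with $\rho^{(A)}_{\s{\alpha}}$, so the outcome statistics of \emph{any} subsequent measurement Alice performs are independent of the label $i$. Consequently, no classical message that Alice generates and transmits to Bob can carry information about $i$, and the task of perfectly distinguishing the $\ket{\psi_{\s{i,\alpha}}}$ reduces to one in which Bob alone must discriminate the ensemble $\{1/m,\, \rho^{(B)}_{\s{i,\alpha}}\}_{i=1}^{m}$ using arbitrary operations on his own subsystem.

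For this one-party problem I would invoke the ordinary Holevo bound, which caps the accessible information by the Holevo quantity
\begin{equation*}
I_{acc}^{B} \;\le\; \chi^{B} \;=\; S\!\left(\rho^{(B)}_{\s{\alpha}}\right) - \sum_{i=1}^{m} \frac{1}{m}\, S\!\left(\rho^{(B)}_{\s{i,\alpha}}\right).
\end{equation*}
Perfect local distinguishability of the $m$ states requires $I_{acc}^{B} \ge \log_2 m$. The key simplification is that each post-measurement joint state $\ket{\psi_{\s{i,\alpha}}}$ is pure, so by the Schmidt decomposition $S\!\left(\rho^{(B)}_{\s{i,\alpha}}\right) = S\!\left(\rho^{(A)}_{\s{i,\alpha}}\right)$; and by Lemma \ref{lem2} the latter equals the same constant $S\!\left(\rho^{(A)}_{\s{\alpha}}\right)$ for every $i$. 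Substituting and using non-negativity of the von Neumann entropy yields
\begin{equation*}
S\!\left(\rho^{(B)}_{\s{\alpha}}\right) \;\ge\; \log_2 m + S\!\left(\rho^{(A)}_{\s{\alpha}}\right) \;\ge\; \log_2 m,
\end{equation*}
which is the claimed inequality (and in fact a slight strengthening of it).

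The step I expect to be most delicate is the reduction to the one-party problem on Bob's side, i.e., formally arguing that classical communication originating from Alice cannot augment Bob's accessible information. Intuitively this is clear, since Alice's outcomes are statistically independent of $i$, but a clean argument should invoke a data-processing / Markov-chain style inequality to show that any LOCC continuation of the protocol is dominated by one in which Bob alone performs a measurement on his conditional state. Once this reduction is in place, the remainder is an immediate application of the Holevo bound together with the purity-plus-Lemma-\ref{lem2} observation used above.
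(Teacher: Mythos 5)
Your reduction to a single-party problem on Bob's side is the fatal step, and no data-processing argument can rescue it, because the claim ``no classical message that Alice generates can carry information about $i$'' is simply false for interactive (or even one-way) protocols. Although Alice's marginal outcome statistics are independent of $i$ (Lemma \ref{lem2}), the information about $i$ is stored in the \emph{correlations} between Alice's and Bob's outcomes, and classical communication is exactly what allows those correlations to be harvested. The standard counterexample sits inside the very setting of the theorem: take $m=2$ and the two Bell states $\ket{\Phi^{\pm}}$ in $\C{2}$ (e.g.\ with $K_\alpha \propto \mathbb{1}$). Both parties' reduced states are $\mathbb{1}/2$ for either state, so your reduction would say Bob alone must discriminate two copies of $\mathbb{1}/2$, giving $I^{B}_{acc}=0$; yet Walgate et al.\ show these states are perfectly distinguishable by LOCC (Alice and Bob each measure in the $X$ basis and compare parities). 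The same example falsifies your ``strengthened'' conclusion $S(\rho^{(B)}_{\s{\alpha}}) \geq \log_2 m + S(\rho^{(A)}_{\s{\alpha}})$: here $S(\rho^{(B)}_{\s{\alpha}})=1$ while the right-hand side is $2$. So the Bob-only Holevo quantity $\chi^{B}=S(\rho^{(B)}_{\s{\alpha}})-\frac{1}{m}\sum_i S(\rho^{(B)}_{\s{i,\alpha}})$ is \emph{not} an upper bound on what LOCC can extract.

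The paper avoids this trap by starting from the genuinely bipartite Holevo-like bound of \cite{B03}, inequality \eqref{LOCCacc2}, namely $I^{LOCC}_{acc} \leq S(\rho^{(A)}_{\s{\alpha}})+S(\rho^{(B)}_{\s{\alpha}})-\max_{X}\frac{1}{m}\sum_i S(\rho^{(X)}_{\s{i,\alpha}})$, in which the extra $S(\rho^{(A)}_{\s{\alpha}})$ term accounts precisely for the correlation information your reduction discards. Your two correct observations --- purity of the post-measurement states giving $S(\rho^{(A)}_{\s{i,\alpha}})=S(\rho^{(B)}_{\s{i,\alpha}})$, and Lemma \ref{lem2} giving $S(\rho^{(A)}_{\s{\alpha}})=S(\rho^{(A)}_{\s{i,\alpha}})$ for all $i$ --- are exactly the ones the paper uses, but applied to \eqref{LOCCacc2} they cancel the Alice terms and yield $I^{LOCC}_{acc}\leq S(\rho^{(B)}_{\s{\alpha}})$, from which $S(\rho^{(B)}_{\s{\alpha}})\geq \log_2 m$ follows. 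If you replace the ordinary Holevo bound by the bound \eqref{LOCCacc2}, your argument becomes the paper's proof; as written, it proves a false statement and therefore contains a genuine gap.
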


\begin{proof}

The Holevo-like upper bound for the locally accessible information of the set of states $\{ \ket{\psi_{\s{i,\alpha}}} \}_{i=1}^{m}$ is given by  \cite{B03}
\begin{align}
\label{LOCCacc2}
I^{LOCC}_{acc}  \leq & \;  S \left( \rho^{(A)}_{\s{\alpha}} \right) + S \left( \rho^{(B)}_{\s{\alpha}} \right) \notag  \\ & -   Max \; \left\{ \frac{1}{m}\sum_{i=1}^{m} S \left( \rho^{(X)}_{\s{i,\alpha}} \right) \; : \; X=A,B \right\}.
\end{align}
Since, the post-measurement states $\ket{\psi_{i,\alpha}}$ are all pure, we have the following: $$S \left(\rho^{(A)}_{\s{i,\alpha}} \right)= S \left(\rho^{(B)}_{\s{i,\alpha}} \right), \; \forall \; 1 \leq i \leq m.$$ Also, that Alice's PMRS are completely indistinguishable (Lemma \ref{lem2}) implies that $S\left(\rho^{(A)}_{\s{\alpha}}\right) = S\left(\rho^{(A)}_{\s{i,\alpha}}\right)$, $\forall \; 1 \leq i \leq m$. This implies that $I^{LOCC}_{acc} \leq S \left( \rho^{(B)}_{\s{\alpha}} \right)$. Since we need to distinguish between $m$ different states, we require that $I^{LOCC}_{acc}$ be at least $log_2 m$ bits. Then the aforementioned inequality tells us that $S(\rho^{(B)}_{\s{\alpha}})$ has to be at least $log_2m$ bits.
\end{proof}

With respect to the standard ONB $\{ \ket{j}_A \}_{j=1}^{d}$ of Alice's system, every MES $\ket{\psi_i}$ from the shared ensemble $\{ \ket{\psi_i} \}_{i=1}^{m}$ can be expressed as

\begin{equation}
\label{schmidt1}
\ket{\psi_i} = \dfrac{1}{\sqrt{d}} \sum_{j=1}^{d}  \ket{j}_{A}  \ket{b_j^{(i)}}_B,
\end{equation}

where $\{ \ket{b_j^{(i)}}_B\}_{j=1}^{d}$ is an ONB for Bob's system for each $i = 1, 2, \cdots, m$. \medbreak

The $i$-th PMRS on Bob's side is then given by
\begin{align}
\label{BT}
 \rho^{(B)}_{\s{i,\alpha}} & =  \frac{1}{Tr\left(K_\alpha^\dag K_\alpha\right)}\sum_{j,k=1}^{d} \pk{j}{K_\alpha^\dag K_\alpha}{k} \ketbra{b_k^{(i)}}{b_j^{(i)}} \notag \\ &  = U_i^{(B)} \dfrac{  K_\alpha^T K_\alpha^* } {Tr\left( K_\alpha^T K_\alpha^*\right)} {U_i^{(B)}}^\dag,
\end{align}

where $U_i^{(B)}$ are local unitaries on Bob's side, such that $U_i^{(B)} \ket{j}_B = \ket{b_j^{(i)}}_B$, for $j =1,2, \cdots, d$, where $i = 1,2, \cdots, m$, and where $K_\alpha^T K_\alpha^*$ are operators on Bob's system, whose matrix elements with respect to the ONB $\{ \ket{j}_{B} \}_{j=1}^{d}$ are the same as the complex conjugate of matrix elements of Alice's POVM effect $K_\alpha^\dag K_\alpha$ with respect to the ONB $\{ \ket{j}_A \}_{j=1}^{d}$. The average PMRS corresponding to the set on Bob's side is thus given by

\begin{equation}
\label{BR}
\rho^{(B)}_{\s{\alpha}} = \sum_{i=1}^{m} \dfrac{1}{m} U_i^{(B)} \dfrac{ K_\alpha^T K_\alpha^* } {Tr\left( K_\alpha^T K_\alpha^*\right)} {U_i^{(B)}}^\dag.
\end{equation} We require that $\rho^{(B)}_{\s{\alpha}}$ satisfies theorem \eqref{thm1}. This requirement puts a constraint on Alice's starting measurement.

We already know one constraint on Alice's starting measurement, i.e.,  it should be OP. Hence whenever $i \neq j$,

\begin{equation}
\label{orthocondition}
\pk{\psi_i}{K_\alpha^\dag K_\alpha \otimes \mathbb{1}}{\psi_j}= 0.
\end{equation}

It is easy to see that condition \eqref{orthocondition} should be subsumed in the requirement that $\rho^{(B)}_{\s{\alpha}}$ should satisfy theorem \eqref{thm1}.

Consider the special case when $m=d$.
\begin{cor}
If $m = d$ in Theorem \eqref{thm1}, then the average PMRS on Bob's side has to be maximally mixed.
\end{cor}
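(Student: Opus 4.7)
The plan is to combine Theorem \ref{thm1} with the basic entropy bound for a state on a $d$-dimensional Hilbert space. Theorem \ref{thm1} already hands us a lower bound on $S(\rho^{(B)}_{\s{\alpha}})$ of $\log_2 m$ bits whenever Alice's PMRS are completely indistinguishable, which by Lemma \ref{lem2} is automatic once Alice makes any measurement on an ensemble of MES. Setting $m = d$ turns this lower bound into $\log_2 d$ bits.

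Next I would invoke the standard fact that for any density operator $\sigma$ on a $d$-dimensional Hilbert space, $S(\sigma) \leq \log_2 d$, with equality if and only if $\sigma = \mathbb{1}/d$. Since $\rho^{(B)}_{\s{\alpha}}$ is a state on Bob's $d$-dimensional system, we have $S(\rho^{(B)}_{\s{\alpha}}) \leq \log_2 d$, and combining this with the lower bound $S(\rho^{(B)}_{\s{\alpha}}) \geq \log_2 d$ forces equality, and therefore forces $\rho^{(B)}_{\s{\alpha}} = \mathbb{1}_B/d$.

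There is really no obstacle here; the entire content of the corollary is that the inequality of Theorem \ref{thm1} is saturated in the borderline case $m = d$, and saturation of the entropy bound on a $d$-dimensional system has a unique solution, namely the maximally mixed state. If one wanted to make the statement more concrete, one could further translate this into a constraint on Alice's Kraus operator $K_\alpha$ by substituting the explicit expression \eqref{BR} for $\rho^{(B)}_{\s{\alpha}}$, which yields
\begin{equation*}
\sum_{i=1}^{d} U_i^{(B)} \frac{K_\alpha^T K_\alpha^*}{Tr(K_\alpha^T K_\alpha^*)} {U_i^{(B)}}^\dag = \frac{\mathbb{1}_B}{d},
\end{equation*}
which is the operational constraint that Alice's starting measurement must satisfy in the $m = d$ case, but this equation is merely a rewriting of the corollary rather than an additional step in its proof.
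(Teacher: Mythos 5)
Your proof is correct and follows essentially the same route as the paper: Theorem \ref{thm1} forces $S(\rho^{(B)}_{\s{\alpha}}) \geq \log_2 d$ when $m=d$, and since the von Neumann entropy of any state on Bob's $d$-dimensional system is at most $\log_2 d$ with equality only for the maximally mixed state, $\rho^{(B)}_{\s{\alpha}}$ must equal $\mathbb{1}_B/d$. Your closing remark that substituting equation \eqref{BR} merely rewrites the corollary as the constraint \eqref{neccond1} also matches how the paper uses the result.
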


\begin{proof}
When $m=d$, we require $log_2d \; bit$ of information to distinguish between $d$ states. The maximal value that $S ( \rho^{(B)}_{\s{\alpha}} )$ can take is $log_2d$ and it can take this value \emph{only} when $\rho^{(B)}_{\s{\alpha}}$ is maximally mixed.
\end{proof}
Thus, \emph{requiring that $S(\rho^{(B)}_{\s{\alpha}})$ be at least $log_2(d) \; bit$ implies that $\rho^{(B)}_{\s{\alpha}}$ has to be a maximally mixed state}, i.e., we require that

\begin{equation}
\label{neccond1}
\sum_{i=1}^{d} \dfrac{1}{d} U_i^{(B)} \dfrac{K_\alpha^T  K_\alpha^*} {Tr\left( K_\alpha^T K_\alpha^*\right)} {U_i^{(B)}}^\dag = \dfrac{1}{d} \mathbb{1}_d.
\end{equation}

The matrix equation \eqref{neccond1} is linear in the matrix elements of $\frac{1}{Tr\left( K_\alpha^\dag K_\alpha \right)}K_\alpha^\dag K_\alpha$, which are also the unknowns of the equation. Hence, we solve equation \eqref{neccond1} to obtain the solution space for the matrix elements of $\frac{1}{Tr\left( K_\alpha^\dag K_\alpha \right)}K_\alpha^\dag K_\alpha$. 

Note that before any measurement, $I^{LOCC}_{acc}$ for the states $\{ \ket{\psi_i} \}_{i=1}^{m}$ is $log_2 d$ bits. The solution of condition \eqref{neccond1} provides us those $K_\alpha^\dag K_\alpha$, for which the $I^{LOCC}_{acc}$ of the post-measurement states $\{ \ket{\psi_{i,\alpha} }\}_{i=1}^m $ is also $log_2 d$ bits. Hence, if the only solution of equation \eqref{neccond1} is that $$ K_\alpha^\dag K_\alpha \propto
 \mathbb{1}_A,$$ then that implies that any non-trivial measurement by Alice will decrease $I^{LOCC}_{acc}$ of the post-measurement states lower than $log_2 d$ bits, which implies that the post-measurements states aren't locally distinguishable anymore. Thus the necessary condition for local indistinguishability is given as follows.

\textbf{R (Necessary Condition): }\emph{If the solution of the condition \eqref{neccond1} is that $K_\alpha^\dag K_\alpha  \propto \mathbb{1}_A$, the states $\{ \ket{\psi_i} \}_{i=1}^{m}$ are indistinguishable by LOCC. If not, then the states may still be distinguishable by LOCC.} 

\section{Perfect Local Distinguishability of Four Generalized Bell States in $\mathbf{\C{4}}$}
\label{secensgen}

The necessary condition, \textbf{R} \eqref{neccond1} has to be tested for protocols initiated by both Alice and Bob, separately. Consider the case of generalized Bell states.

\begin{defn}
\label{def1}
Generalized Bell states are MES in $\C{d}$ of the form

\begin{equation}
\label{GenBell}
\ket{\psi_{nm}^{(d)}} \equiv \dfrac{1}{\sqrt{d}}\sum_{j=0}^{d-1}e^{\frac{2 \pi i j n}{d}} \ket{j}_A\ket{j \oplus_d m}_B,
\end{equation}

where $n,m \in \{ 0, 1, \cdots, d-1 \}$ and where $\{ \ket{j}_A \}_{j=0}^{d-1}$ is an ONB for Alice's subsystem and $\{ \ket{j}_B \}_{j=0}^{d-1}$ is an ONB for Bob's subsystem. Here $j \oplus_d m \equiv (j + m) mod\; d$.
\end{defn}

Note that $\braket{\psi_{n'm'}}{\psi_{nm}}= \delta_{n'n}\delta_{m'm}$, $\forall \; n,n',m,m' \in \{ 0, 1, \cdots, d-1 \}$.

When $d=4$, $\binom{16}{4}$ sets of four generalized Bell states can be partitioned into $122$ distinct equivalence classes, where all sets in the same equivalence class differ from each other by action of a local unitary (see Appendix \ref{122} for details). Thus, if states in any one set in an equivalence class are locally (in)distinguishable, then so are the states of any other set in the same equivalence class. We tested the condition \textbf{R} \eqref{neccond1} on these $122$ equivalence classes, and found that $39$ of them fail the necessary condition, and are hence locally indistinguishable. An explicit proof of one such a set is given in the following example \ref{ens2}.

\begin{example}
\label{ens2}
The states $\ket{\psi_{00}^{(4)}}$, $\ket{\psi_{11}^{(4)}}$, $\ket{\psi_{31}^{(4)}}$, $\ket{\psi_{32}^{(4)}}$ are locally indistinguishable.
\end{example}

\begin{proof}
In reference \cite{S11} (example 1, p 6), it has already been shown that the given set of states are indistinguishable by one-way LOCC when only projective measurements are used in the LOCC protocol. Here we will generalize the result for all possible LOCC protocols. Also, we show that the condition \eqref{neccond1} is stronger than the OP condition \eqref{orthocondition}.

Let Alice commense the protocol by applying a measurement, whose Kraus operators are $\{K_i\}_{i=1}^{n}$ on her subsystem, and obtain the $\alpha$-th outcome. We impose the conditions \eqref{orthocondition} on $K_\alpha^\dag K_\alpha$. 

The orthogonality preserving condition \eqref{orthocondition} is given by:

\begin{subequations}
\label{nmn'm'}
\begin{align}
\label{0011}
& \pk{\psi_{00}^{(4)}}{\left( K_\alpha^\dag K_\alpha \otimes \mathbb{1} \right)}{\psi_{11}^{(4)}}=0,\\
\label{0031}
&\pk{\psi_{00}^{(4)}}{\left( K_\alpha^\dag K_\alpha \otimes \mathbb{1} \right)}{\psi_{31}^{(4)}}=0,\\
\label{0032}
&\pk{\psi_{00}^{(4)}}{\left( K_\alpha^\dag K_\alpha \otimes \mathbb{1} \right)}{\psi_{32}^{(4)}}=0,\\
\label{1131}
&\pk{\psi_{11}^{(4)}}{\left( K_\alpha^\dag K_\alpha \otimes \mathbb{1} \right)}{\psi_{31}^{(4)}}=0,\\
\label{1132}
&\pk{\psi_{11}^{(4)}}{\left( K_\alpha^\dag K_\alpha \otimes \mathbb{1} \right)}{\psi_{32}^{(4)}}=0,\\
\label{3132}
&\pk{\psi_{31}^{(4)}}{\left( K_\alpha^\dag K_\alpha \otimes \mathbb{1} \right)}{\psi_{32}^{(4)}}=0.
\end{align}
\end{subequations}

Let the spectral decomposition of $K_\alpha^\dag K_\alpha$ be given by 

\begin{equation}
\label{KKSD}
K_\alpha^\dag K_\alpha = \proj{u} + \proj{v}+\proj{w} + \proj{x},
\end{equation}
where $\braket{u}{v}=\braket{u}{w}=\braket{u}{x}=\braket{v}{w}=\braket{v}{x}=\braket{w}{x}=0$, but $\ket{u}$, $\ket{v}$, $\ket{w}$ and $\ket{x}$ aren't normalized. 

Additionally, let's $\ket{u}$, $\ket{v}$, $\ket{w}$ and $\ket{x}$ have the following expansions in the standard ONB.

\begin{subequations}
\label{uvwx}
\begin{align}
\label{u}
&\ket{u} = \sum_{i=0}^{3}u_i \ket{i}, \\
\label{v}
&\ket{v} = \sum_{i=0}^{3}v_i \ket{i}, \\
\label{w}
&\ket{w} = \sum_{i=0}^{3}w_i \ket{i}, \\
\label{x}
&\ket{x} = \sum_{i=0}^{3}x_i \ket{i}.
\end{align}
\end{subequations}

Then equations \eqref{0011}-\eqref{3132} respectively become (using the definition \eqref{def1})

\begin{subequations}
\label{nmn'm'1}
\small
\begin{align}
\label{00111}
  & (u_1u_0^*+v_1v_0^* + w_1w_0^*+ x_1x_0^*) \notag \\ + & i (u_2u_1^*+v_2v_1^* + w_2w_1^*+ x_2x_1^*) \notag \\
- & (u_3u_2^*+v_3v_2^* + w_3w_2^*+ x_3x_2^*) \notag \\ - & i (u_0u_3^*+v_0v_3^* + w_0w_3^*+ x_0x_3^*) = 0, 
\end{align}

\begin{align}
\label{00311}
  & (u_1u_0^*+v_1v_0^* + w_1w_0^*+ x_1x_0^*) \notag \\ - & i (u_2u_1^*+v_2v_1^* + w_2w_1^*+ x_2x_1^*)  \notag \\
- & (u_3u_2^*+v_3v_2^* + w_3w_2^*+ x_3x_2^*) \notag \\ + & i(u_0u_3^*+v_0v_3^* + w_0w_3^*+ x_0x_3^*) = 0, 
\end{align}

\begin{align}
\label{00321}
  & (u_2u_0^*+v_2v_0^* + w_2w_0^*+ x_2x_0^*)  \notag \\- & i (u_3u_1^*+v_3v_1^* + w_3w_1^*+ x_3x_1^*) \notag \\
- & (u_0u_2^*+v_0v_2^* + w_0w_2^*+ x_0x_2^*)  \notag \\+  & i (u_1u_3^*+v_1v_3^* + w_1w_3^*+ x_1x_3^*) = 0, 
\end{align}

\begin{align}
\label{11311}
  & (u_0u_0^*+v_0v_0^* + w_0w_0^*+ x_0x_0^*) \notag \\ -  &(u_1u_1^*+v_1v_1^* + w_1w_1^*+ x_1x_1^*) \notag \\ 
+ &  (u_2u_2^*+v_2v_2^* + w_2w_2^*+ x_2x_2^*) \notag \\ -  &(u_3u_3^*+v_3 v_3^* + w_3w_3^*+ x_3x_3^*)  = 0, 
\end{align}

\begin{align}
\label{11321}
  &(u_1u_0^*+v_1v_0^* + w_1w_0^*+ x_1x_0^*) \notag \\ - &  (u_2u_1^*+v_2v_1^* + w_2w_1^*+ x_2x_1^*) \notag \\ 
+ &  (u_3u_2^*+v_3v_2^* + w_3w_2^*+ x_3x_2^*)  \notag \\- & (u_0u_3^*+v_0 v_3^* + w_0w_3^*+ x_0x_3^*)  = 0, 
\end{align}

\begin{align}
\label{31321}
  &(u_1u_0^*+v_1v_0^* + w_1w_0^*+ x_1x_0^*)  \notag \\+ & (u_2u_1^*+v_2v_1^* + w_2w_1^*+ x_2x_1^*) \notag \\
+ &  (u_3u_2^*+v_3v_2^* + w_3w_2^*+ x_3x_2^*) \notag \\ +& (u_0u_3^*+v_0 v_3^* + w_0w_3^*+ x_0x_3^*)  = 0. 
\end{align}

\end{subequations}

Expand $K_\alpha^\dag K_\alpha$ in the $\ketbra{i}{j}$ basis; then equations \eqref{u}-\eqref{x} allow us to define
\small
\begin{equation}
\label{xi}
 \xi_{ij} \equiv \left( K_\alpha^\dag K_\alpha\right)_{ij} = u_iu_j^*+ v_iv_j^*+ w_iw_j^* + x_ix_j^* , \; \forall \; 0 \leq i,j \leq 3. 
\end{equation}
\normalsize
Using equation \eqref{xi}, equations \eqref{00111}-\eqref{31321} take a condensed form.

\begin{subequations}

\begin{equation}
\label{00112}
\begin{matrix} \begin{pmatrix}
\xi_{01}^* & \xi_{12}^* & \xi_{23}^* & \xi_{30}^* \end{pmatrix} \\ \mbox{} \\ \mbox{} \\ \mbox{} \end{matrix}
\begin{pmatrix}
1 \\ 
i \\
-1\\
-i
\end{pmatrix}=0.
\end{equation}

\begin{equation}
\label{00312}
\begin{matrix} \begin{pmatrix}
\xi_{01}^* & \xi_{12}^* & \xi_{23}^* & \xi_{30}^* \end{pmatrix} \\ \mbox{} \\ \mbox{} \\ \mbox{} \end{matrix}
\begin{pmatrix}
\;1\\ 
-i\\
-1\\
\;i
\end{pmatrix}=0.
\end{equation}

\begin{equation}
\label{00322}
\begin{matrix} \begin{pmatrix}
\xi_{02}^* & \xi_{13}^* & \xi_{02} & \xi_{13} \end{pmatrix} \\ \mbox{} \\ \mbox{} \\ \mbox{} \end{matrix}
\begin{pmatrix}
\;1\\ 
-i\\
-1\\
\;i
\end{pmatrix}=0.
\end{equation}

\begin{equation}
\label{11312}
\begin{matrix} \begin{pmatrix}
\xi_{00} & \xi_{11} & \xi_{22} & \xi_{33} \end{pmatrix} \\ \mbox{} \\ \mbox{} \\ \mbox{} \end{matrix}
\begin{pmatrix}
\;1\\ 
-1\\
\;1\\
-1
\end{pmatrix}=0.
\end{equation}

\begin{equation}
\label{11322}
\begin{matrix} \begin{pmatrix}
\xi_{01}^* & \xi_{12}^* & \xi_{23}^* & \xi_{30}^*  \end{pmatrix} \\ \mbox{} \\ \mbox{} \\ \mbox{} \end{matrix}
\begin{pmatrix}
\;1\\ 
-1\\
\;1\\
-1
\end{pmatrix}=0.
\end{equation}

\begin{equation}
\label{31322}
\begin{matrix} \begin{pmatrix}
\xi_{01}^* & \xi_{12}^* & \xi_{23}^* & \xi_{30}^*  \end{pmatrix} \\ \mbox{} \\ \mbox{} \\ \mbox{} \end{matrix}
\begin{pmatrix}
1\\ 
1\\
1\\
1
\end{pmatrix}=0.
\end{equation}

\end{subequations}

Equations \eqref{00112} , \eqref{00312}, \eqref{11322} and \eqref{31322} collectively imply that 

\begin{equation}
\label{offdiag1}
(\xi_{01},\xi_{12}, \xi_{23},\xi_{30})= 0. 
\end{equation}

Equation \eqref{11312} implies that 
\begin{align}
\label{diag}
& (\xi_{00},\xi_{11}, \xi_{22},\xi_{33}) \notag \\  = & \; a_0 (1,1,1,1) + a_1(1,1,-1,-1) + a_2(1,-1,-1,1),
\end{align}
where $a_0$, $a_1$ and $a_2$ are real. This is because $\xi_{ii}$ are diagonal matrix elements of $K_\alpha^\dag K_\alpha$.

Equation \eqref{00312} implies that $(\xi_{02},\xi_{13}, \xi_{20},\xi_{31})$ has to be of the form 

\begin{align}
\label{offdiag2}
& (\xi_{02},\xi_{13}, \xi_{20},\xi_{31}) \notag \\  = & \; b_0 (1,1,1,1) + b_1(1,i,-1,-i) + b_2(1,-1,1,-1).
\end{align}

Since $K_\alpha^\dag K_\alpha$ is hermitian, $\xi_{ij} = \xi_{ji}^*$. This implies that $(b_0 - b_1 + b_2)^* = b_0 + b_1 + b_2$ (using $i=0,j=2$) and $(b_0+ib_1 - b_2)^* = b_0 -i b_1 -b_2$ (using $i=1,j=3$) and these imply that $b_1 = 0$ and $b_0$ and $b_2$ are real.  

Thus putting the constraints imposed by equations \eqref{offdiag1}, \eqref{diag} and \eqref{offdiag2}, tells us that in the $\ketbra{i}{j}$ basis $K_\alpha^\dag K_\alpha$ is given by equation:
\small
\begin{align}
\label{KKpc}
& K_\alpha^\dag K_\alpha   = \notag \\ &  \begin{pmatrix}
                          a_0 + a_1 + a_2 & 0                & b_0 + b_2       & 0               \\
                          0               &  a_0 + a_1 - a_2 & 0               & b_0 - b_2       \\
                          b_0 + b_2       & 0                & a_0 - a_1 - a_2 & 0               \\
                          0               & b_0 - b_2        & 0               & a_0 - a_1 + a_2
                         \end{pmatrix}.
\end{align} 
\normalsize
The eigensystem of $K_\alpha^\dag K_\alpha$ in equation \eqref{KKpc} is given in the following table 
\begin{table}[H]
\label{tabeig}
 \begin{tabular}{ | c | c | }
 \hline
  Eigenvalue                 & Eigenvector  \\  \hline 
  $\lambda_u =  a_0 + \mu_0$ & $\ket{u}  =  N_u \begin{pmatrix}
			      a_1 + a_2 + \mu_0 \\ 
			      0 \\	
			      b_0 + b_1 \\
			      0                 
			      \end{pmatrix}  \equiv \begin{pmatrix}
			      cos \frac{\zeta}{2} \\ 
			      0 \\
			      sin \frac{\zeta}{2} \\
			      0                 
			      \end{pmatrix}$ \\
			      \hline 
  $\lambda_v =  a_0 - \mu_0$ & $\ket{v}  = N_v \begin{pmatrix}
                  a_1 + a_2 - \mu_0 \\ 
                  0 \\
                  b_0 + b_1 \\
                  0                 
                 \end{pmatrix}  \equiv \begin{pmatrix}
                  -sin \frac{\zeta}{2} \\ 
                  0 \\
                  cos \frac{\zeta}{2} \\
                  0                 
                 \end{pmatrix}$ \\
			      \hline 
  $\lambda_w =  a_0 + \mu_1$ & $\ket{w}  = N_w \begin{pmatrix}
                  (a_1 - a_2) - \mu_1 \\ 
                  0 \\
                  b_0 - b_1 \\
                  0                 
                 \end{pmatrix}  \equiv \begin{pmatrix}
                  cos \frac{\eta}{2} \\ 
                  0 \\
                  sin \frac{\eta}{2} \\
                  0                 
                 \end{pmatrix}$ \\
			      \hline 
  $\lambda_x =  a_0 + \mu_1$ & $ \ket{x}  = N_x \begin{pmatrix}
                  (a_1 - a_2) - \mu_1 \\ 
                  0 \\
                  b_0 - b_1 \\
                  0                 
                 \end{pmatrix}  \equiv \begin{pmatrix}
                  -sin \frac{\eta}{2} \\ 
                  0 \\
                  cos \frac{\eta}{2} \\
                  0                 
                 \end{pmatrix}$ \\ 
			      \hline 
 \end{tabular}
 \caption{Eigenvalues and Eigenvectors of $K_\alpha^\dag K_\alpha$.}
\end{table}
where $\mu_0$ and $\mu_1$ are given by $ \sqrt{(a_1+a_2)^2 + (b_0 + b_2)^2}$ and $\sqrt{(a_1-a_2)^2 + (b_0 - b_2)^2}$ and $N_u$, $N_v$, $N_w$ and $N_x$ are normalization factors. For $K_\alpha^\dag K_\alpha$ to be a positive semidefinite operator it is necessary that $a_0 \geq |\mu_0|,|\mu_1|$. 

Using the table above,
\begin{align}
\label{KKexp}
K_\alpha^\dag K_\alpha  = &   \; a_0 \mathbb{1} \notag \\
                          &  +    \mu_0 \begin{pmatrix}
                                                                              cos \zeta & 0 & sin \zeta  & 0 \\
                                                                              0         & 0 & 0          & 0 \\
                                                                              sin \zeta & 0 & -cos \zeta & 0 \\
                                                                              0         & 0 & 0          & 0 
\end{pmatrix}  \notag \\
               & +  \mu_1 \begin{pmatrix}
                                                                             0 &  0        & 0 & 0         \\
                                                                             0 &  cos \eta & 0 & sin \eta  \\
                                                                             0 &  0        & 0 & 0         \\
                                                                             0 &  sin \eta & 0 & -cos \eta
\end{pmatrix}.
\end{align}
Imposing the OP condition \eqref{orthocondition} doesn't give us any conclusion about the local (in)distinguishability of the states  $\ket{\psi_{00}^{(4)}}$, $\ket{\psi_{11}^{(4)}}$, $\ket{\psi_{31}^{(4)}}$, $\ket{\psi_{32}^{(4)}}$, since the solution for $K_\alpha^\dag K_\alpha$ (equation \eqref{KKexp}) is not a multiple of the identity, i.e., the measurement isn't constrained to be trivial. Hence, at this point we do not know if the states are distinguishable or not. 

We now obtain the post-measurement joint states $\ket{\psi^{(4)}_{\s{\alpha,nm}}}$ using necessary condition \eqref{orthocondition} for OP. 

$K_\alpha^\dag K_\alpha$ enables us to determine $K_\alpha$ upto a left-unitary, i.e., $K_\alpha = U \sqrt{K_\alpha^\dag K_\alpha}$, where $U$ is a $4 \times 4$ unitary matrix. This unitary $U$ is irrelevant because physically it implies Alice performing a unitary after her measurement and we know that such a unitary transformation on Alice's side (or Bob's side) doesn't alter the local distinguishability of the set of states. Hence we can assume the $U = \mathbb{1}_4$. Using the above table of eigenvalues and eigenvectors, we get that $K_\alpha \left(\equiv \sqrt{K_\alpha^\dag K_\alpha}\right)$ is given by \small
\begin{widetext}

\begin{equation}
\label{K}
\begin{split}
& K_\alpha =   \\ 
& \sqrt{a_0 +  \mu_0} \left( cos \frac{\zeta}{2}\ket{0} + sin \frac{\zeta}{2} \ket{2}  \right) \left( cos \frac{\zeta}{2}\bra{0} + sin \frac{\zeta}{2} \bra{2}  \right)  \\
 & +              \sqrt{a_0 -  \mu_0} \left( -sin \frac{\zeta}{2}\ket{0} + cos \frac{\zeta}{2} \ket{2}  \right) \left( -sin \frac{\zeta}{2}\bra{0} + cos \frac{\zeta}{2} \bra{2}  \right)  \\ 
 & +              \sqrt{a_0 +  \mu_1} \left( cos \frac{\eta}{2}\ket{1} + sin \frac{\eta}{2} \ket{3}  \right) \left( cos \frac{\eta}{2}\bra{1} + sin \frac{\eta}{2} \bra{3}  \right)  \\
 & +              \sqrt{a_0 - \mu_1} \left( -sin \frac{\eta}{2}\ket{1} + cos \frac{\eta}{2} \ket{3}  \right) \left( -sin \frac{\eta}{2}\bra{1} + cos \frac{\eta}{2} \bra{3}  \right).
\end{split}
\end{equation}

Using equation \eqref{K} we now give the Schmidt decomposition of the states $\ket{\psi^{(4)}_{\s{\alpha,00}}}, \ket{\psi^{(4)}_{\s{\alpha,11}}}, \ket{\psi^{(4)}_{\s{\alpha,31}}}, \ket{\psi^{(4)}_{\s{\alpha,32}}} $.
\begin{subequations}
\label{phis}
\begin{equation}
\label{phi00}
\begin{split}
\ket{\psi^{(4)}_{\s{\alpha,00}}}= ~& \dfrac{1}{2}\sqrt{1+\frac{\mu_0}{a_0} } \; \ket{\chi}_A        \left( cos \frac{\zeta}{2}\ket{0}_B + sin \frac{\zeta}{2}\ket{2}_B \right)  
 +  \dfrac{1}{2}\sqrt{1-\frac{\mu_0}{a_0} } \; \ket{\kappa}_A           \left(-sin \frac{\zeta}{2}\ket{0}_B + cos \frac{\zeta}{2}\ket{2}_B \right)   \\
+ &  \dfrac{1}{2}\sqrt{1+\frac{\mu_1}{a_0} } \; \ket{\omega}_A \left( cos \frac{\eta}{2}\ket{1}_B + sin \frac{\eta}{2}\ket{3}_B \right)   
 +   \dfrac{1}{2}\sqrt{1-\frac{\mu_1}{a_0} } \; \ket{\tau}_A           \left(-sin \frac{\eta}{2}\ket{1}_B + cos \frac{\eta}{2}\ket{3}_B \right),  
\end{split}
\end{equation}
\begin{equation}
\label{phi11}
\begin{split}
\ket{\psi^{(4)}_{\s{\alpha,11}}}=  ~& \dfrac{1}{2}\sqrt{1+\frac{\mu_0}{a_0} } \; \ket{\chi}_A        \left( cos \frac{\zeta}{2}\ket{1}_B - sin \frac{\zeta}{2}\ket{3}_B \right)   
 -   \dfrac{1}{2}\sqrt{1-\frac{\mu_0}{a_0} } \; \ket{\kappa}_A           \left(sin \frac{\zeta}{2}\ket{1}_B + cos \frac{\zeta}{2}\ket{3}_B \right)  \\  
 + & i  \dfrac{1}{2}\sqrt{1+\frac{\mu_1}{a_0} } \; \ket{\omega}_A  \left( - sin \frac{\eta}{2}\ket{0}_B + cos \frac{\eta}{2}\ket{2}_B \right)   
 - i  \dfrac{1}{2}\sqrt{1-\frac{\mu_1}{a_0} } \; \ket{\tau}_A           \left( cos \frac{\eta}{2}\ket{0}_B+  sin \frac{\eta}{2}\ket{2}_B \right),  
\end{split}
\end{equation}
\begin{equation}
\label{phi31}
\begin{split}
 \ket{\psi^{(4)}_{\s{\alpha,31}}}= ~ & \dfrac{1}{2}\sqrt{1+\frac{\mu_0}{a_0} } \; \ket{\chi}_A        \left( cos \frac{\zeta}{2}\ket{1}_B - sin \frac{\zeta}{2}\ket{3}_B \right)    
 -   \dfrac{1}{2}\sqrt{1-\frac{\mu_0}{a_0} } \; \ket{\kappa}_A           \left(sin \frac{\zeta}{2}\ket{1}_B +  cos \frac{\zeta}{2}\ket{3}_B \right)   \\  
- & i  \dfrac{1}{2}\sqrt{1+\frac{\mu_1}{a_0} } \; \ket{\omega}_A  \left( - sin \frac{\eta}{2}\ket{0}_B + cos \frac{\eta}{2}\ket{2}_B \right) 
 + i  \dfrac{1}{2}\sqrt{1-\frac{\mu_1}{a_0} } \; \ket{\tau}_A           \left( cos \frac{\eta}{2}\ket{0}_B+  sin \frac{\eta}{2}\ket{2}_B \right),  
\end{split}
\end{equation}
\begin{equation}
\begin{split}
\label{phi32}
\ket{\psi^{(4)}_{\s{\alpha,32}}}= ~ & \dfrac{1}{2}\sqrt{1+\frac{\mu_0}{a_0} } \; \ket{\chi}_A        \left(- sin \frac{\zeta}{2}\ket{0}_B + cos \frac{\zeta}{2}\ket{2}_B  \right)  
-   \dfrac{1}{2}\sqrt{1-\frac{\mu_0}{a_0} } \; \ket{\kappa}_A           \left(   cos \frac{\zeta}{2}\ket{0}_B+ sin \frac{\zeta}{2}\ket{2}_B \right)   \\  
 + & i  \dfrac{1}{2}\sqrt{1+\frac{\mu_1}{a_0} } \; \ket{\omega}_A  \left( sin \frac{\eta}{2}\ket{1}_B - cos \frac{\eta}{2}\ket{3}_B \right)  
 +  i  \dfrac{1}{2}\sqrt{1-\frac{\mu_1}{a_0} } \; \ket{\tau}_A           \left( cos \frac{\eta}{2}\ket{1}_B+  sin \frac{\eta}{2}\ket{3}_B \right),  
\end{split}
\end{equation}
\end{subequations}
\end{widetext}
where \smallbreak
$\ket{\chi}_A = \left( cos \frac{\zeta}{2}\ket{0}_A + sin \frac{\zeta}{2}\ket{2}_A \right)$, \smallbreak
$\ket{\kappa}_A = \left(-sin \frac{\zeta}{2}\ket{0}_A + cos \frac{\zeta}{2}\ket{2}_A \right) $, \smallbreak
$\ket{\omega}_A =  \left( cos \frac{\eta}{2}\ket{1}_A + sin \frac{\eta}{2}\ket{3}_A \right)$, \smallbreak
$\ket{\tau}_A =  \left(-sin \frac{\eta}{2}\ket{1}_A + cos \frac{\eta}{2}\ket{3}_A \right) $ \medbreak
are vectors of an ONB on Alice's subsystem. It is easy to check that the states  $\ket{\psi^{(4)}_{\s{\alpha,00}}}, \ket{\psi^{(4)}_{\s{\alpha,11}}}, \ket{\psi^{(4)}_{\s{\alpha,31}}}, \ket{\psi^{(4)}_{\s{\alpha,32}}} $ are pairwise orthogonal.

From equation \eqref{phis} it can be easily seen that the spectra of $\rho^{(A)}_{\s{\alpha}}$, $\rho^{(A)}_{\s{\alpha,00}}$, $\rho^{(A)}_{\s{\alpha,11}}$, $\rho^{(A)}_{\s{\alpha,31}}$, $\rho^{(A)}_{\s{\alpha,32}}$, $\rho^{(B)}_{\s{\alpha,00}}$, $\rho^{(B)}_{\s{\alpha,11}}$, $\rho^{(B)}_{\s{\alpha,31}}$, $\rho^{(B)}_{\s{\alpha,32}}$ are the same, and the common spectra is of the form \small$\left\{ \frac{1+\frac{\mu_0}{a_0}}{4} \right. $, $\frac{1-\frac{\mu_0}{a_0}}{4}$, $\frac{1+\frac{\mu_1}{a_0}}{4}$, $\left. \frac{1-\frac{\mu_1}{a_0}}{4} \right\}$\normalsize. The equality of the spectra of the aforementioned states was earlier predicted by theorem \eqref{thm1}. The spectrum of \small $\rho^{(B)}_{\s{\alpha}}$ is given by $\left\{ \frac{1+\frac{\mu_0}{2a_0}}{4}, \frac{1-\frac{\mu_0}{2a_0}}{4}, \frac{1+\frac{\mu_1}{2a_0}}{4}, \frac{1-\frac{\mu_1}{2a_0}}{4} \right\} $\normalsize. Substituting these quantities in the LHS of the inequality \eqref{LOCCacc2} we get that
\begin{align}
\label{LOCCacc1}
I^{LOCC}_{acc} &  \leq S\left(\rho^{(B)}_{\s{\alpha}}\right) \notag \\ &   =  H\left(\dfrac{1+\frac{\mu_0}{2a_0}}{4},\dfrac{1-\frac{\mu_0}{2a_0}}{4},\dfrac{1+\frac{\mu_1}{2a_0}}{4},\dfrac{1-\frac{\mu_1}{2a_0}}{4}\right),
\end{align} 
where $H\left(\frac{1+\frac{\mu_0}{2a_0}}{4},\frac{1-\frac{\mu_0}{2a_0}}{4},\frac{1+\frac{\mu_1}{2a_0}}{4},\frac{1-\frac{\mu_1}{2a_0}}{4}\right)$ is the Shannon entropy for $\left\{ \frac{1+\frac{\mu_0}{2a_0}}{4},\frac{1-\frac{\mu_0}{2a_0}}{4},\frac{1+\frac{\mu_1}{2a_0}}{4},\frac{1-\frac{\mu_1}{2a_0}}{4} \right\}$.

Thus we see that unless $\mu_0$ $=$ $\mu_1$ $=$ $0$, $H(  \frac{1}{4}(1+\frac{\mu_0}{2a_0})$, $\frac{1}{4}(1-\frac{\mu_0}{2a_0})$,  $\frac{1}{4}(1+\frac{\mu_1}{2a_0})$, $\frac{1}{4}(1-\frac{\mu_1}{2a_0}))$ $< 2 \; bit$. Thus if $\mu_0 \neq 0$ or $\mu_1 \neq 0$, the locally accessible information of the set $\{ \ket{\psi_{\s{\alpha,00}}},\ket{\psi_{\s{\alpha,11}}},\ket{\psi_{\s{\alpha,31}}},\ket{\psi_{\s{\alpha,32}}}\}$ is lower than $2 \;  bit$, meaning that there is no LOCC protocol that Alice and Bob can use to perfectly distinguish between states in the set. On the other hand if $\mu_0 = \mu_1 =0$, Alice's POVM is a trivial one (see equation \eqref{KKexp}). Thus the states  $\ket{\psi_{00}^{(4)}}$, $\ket{\psi_{11}^{(4)}}$, $\ket{\psi_{31}^{(4)}}$, $\ket{\psi_{32}^{(4)}}$ fail to satisfy the necessary condition, \textbf{R}. It is significant to note that the local indistinguishability of these states was established \emph{only} after it was demanded that $S(\rho^{(B)}_{\s{\alpha}}) = 2 \; bit$, in accordance with 
theorem \eqref{thm1}. Thus, this also shows that condition \eqref{neccond1}, which is equivalent to theorem \eqref{thm1} when $m=d$, is stronger than the OP condition \eqref{orthocondition}.

This shows us that there is no LOCC protocol to perfectly distinguish the states in the set, if Alice starts the protocol. Similarly it can be shown that there is no LOCC protocol to perfectly distinguish the states of the set, if Bob starts the protocol; the arguments to establish this follow the same sequence of reasoning as the arguments above.

\end{proof}

In the following we list the sets of four generalized Bell states from $\C{4} $ which fail the necessary condition, \textbf{R} in the same fashion as example \ref{ens2}; each set listed represents an equivalence class of sets of four generalized Bell states which are local unitarily equivalent to it. These are $39$ in number. \medbreak

\begin{widetext}
\begin{center}
\begin{table}[H]
\caption{Of the $39$ sets of states which fail the necessary condition \textbf{R}, $25$ sets are listed here. All of these sets contain the states $\ket{\psi_{00}^{(4)}}$ and $\ket{\psi_{01}^{(4)}}$; the remaining states in each set are listed below.}
\begin{tabular}{|l l l l l l l |}
\hline 
 &   &  &  & & & \\
$\{ \ket{\psi_{02}^{(4)}}$, $\ket{\psi_{20}^{(4)}}\}$, &
$\{ \ket{\psi_{02}^{(4)}}$, $\ket{\psi_{22}^{(4)}}\}$, & 
$\{ \ket{\psi_{10}^{(4)}}$, $\ket{\psi_{12}^{(4)}}\}$, & 
$\{ \ket{\psi_{10}^{(4)}}$, $\ket{\psi_{21}^{(4)}}\}$, &
$\{ \ket{\psi_{10}^{(4)}}$, $\ket{\psi_{22}^{(4)}}\}$, & 
$\{ \ket{\psi_{11}^{(4)}}$, $\ket{\psi_{13}^{(4)}}\}$, &
$\{ \ket{\psi_{11}^{(4)}}$, $\ket{\psi_{20}^{(4)}}\}$, \\ 
$\{ \ket{\psi_{11}^{(4)}}$, $\ket{\psi_{23}^{(4)}}\}$, & 
$\{ \ket{\psi_{12}^{(4)}}$, $\ket{\psi_{21}^{(4)}}\}$, & 
$\{ \ket{\psi_{12}^{(4)}}$, $\ket{\psi_{22}^{(4)}}\}$, & 
$\{ \ket{\psi_{13}^{(4)}}$, $\ket{\psi_{20}^{(4)}}\}$, & 
$\{ \ket{\psi_{13}^{(4)}}$, $\ket{\psi_{23}^{(4)}}\}$, & 
$\{ \ket{\psi_{20}^{(4)}}$, $\ket{\psi_{22}^{(4)}}\}$, & 
$\{ \ket{\psi_{20}^{(4)}}$, $\ket{\psi_{23}^{(4)}}\}$, \\
$\{ \ket{\psi_{20}^{(4)}}$, $\ket{\psi_{31}^{(4)}}\}$, & 
$\{ \ket{\psi_{20}^{(4)}}$, $\ket{\psi_{33}^{(4)}}\}$, & 
$\{ \ket{\psi_{21}^{(4)}}$, $\ket{\psi_{23}^{(4)}}\}$, & 
$\{ \ket{\psi_{21}^{(4)}}$, $\ket{\psi_{30}^{(4)}}\}$, & 
$\{ \ket{\psi_{21}^{(4)}}$, $\ket{\psi_{32}^{(4)}}\}$, &
$\{ \ket{\psi_{22}^{(4)}}$, $\ket{\psi_{30}^{(4)}}\}$, &
$\{ \ket{\psi_{22}^{(4)}}$, $\ket{\psi_{32}^{(4)}}\}$, \\
$\{ \ket{\psi_{23}^{(4)}}$, $\ket{\psi_{31}^{(4)}}\}$, &
$\{ \ket{\psi_{23}^{(4)}}$, $\ket{\psi_{33}^{(4)}}\}$, &
$\{ \ket{\psi_{30}^{(4)}}$, $\ket{\psi_{32}^{(4)}}\}$, &
$\{ \ket{\psi_{31}^{(4)}}$, $\ket{\psi_{33}^{(4)}}\}$  &
&   &  \\
& & & & & & \\
\hline
\end {tabular}
\label{tab1}
\end{table}

\begin{table}[H]
\caption{The remaining $14$ sets, which fail the necessary condition \textbf{R}, contain the states $\ket{\psi_{00}^{(4)}}$ and $\ket{\psi_{02}^{(4)}}$; remaining states in each set are listed below.}
\begin{tabular}{|l l l l l l l|}
\hline
&   &  &  &  & &  \\
$\{ \ket{\psi_{10}^{(4)}}$, $\ket{\psi_{20}^{(4)}}\}$, &
$\{ \ket{\psi_{10}^{(4)}}$, $\ket{\psi_{22}^{(4)}}\}$, &
$\{ \ket{\psi_{10}^{(4)}}$, $\ket{\psi_{31}^{(4)}}\}$, &
$\{ \ket{\psi_{10}^{(4)}}$, $\ket{\psi_{32}^{(4)}}\}$, &
$\{ \ket{\psi_{10}^{(4)}}$, $\ket{\psi_{33}^{(4)}}\}$, &
$\{ \ket{\psi_{11}^{(4)}}$, $\ket{\psi_{20}^{(4)}}\}$, &
$\{ \ket{\psi_{11}^{(4)}}$, $\ket{\psi_{22}^{(4)}}\}$, \\
$\{ \ket{\psi_{11}^{(4)}}$, $\ket{\psi_{30}^{(4)}}\}$, &
$\{ \ket{\psi_{11}^{(4)}}$, $\ket{\psi_{31}^{(4)}}\}$, &
$\{ \ket{\psi_{11}^{(4)}}$, $\ket{\psi_{32}^{(4)}}\}$, &
$\{ \ket{\psi_{20}^{(4)}}$, $\ket{\psi_{30}^{(4)}}\}$, &
$\{ \ket{\psi_{20}^{(4)}}$, $\ket{\psi_{31}^{(4)}}\}$, &
$\{ \ket{\psi_{20}^{(4)}}$, $\ket{\psi_{32}^{(4)}}\}$, \\
$\{ \ket{\psi_{20}^{(4)}}$, $\ket{\psi_{33}^{(4)}}\}$  &
& & & & & \\
& & & & & & \\
\hline 
\end{tabular}
\label{tab2}
\end{table}
\end{center}
\end{widetext}
The necessary condition \textbf{R} is satisfied by all sets in all remaining $83$ equivalence classes. That doesn't mean that states in these sets \emph{should} be perfectly locally distinguishable. Therefore it comes as a surprise that states in each of the remaining $83$ equivalence classes are indeed locally distinguishable, and that too by one-way LOCC using only projective measurements. In Appendix \eqref{app} we list $83$ distinct representative sets for all of these equivalence classes and give the LOCC protocols to distinguish the states in each such set. Thus the necessary condition is also sufficient to establish the local (in)distinguishability of all sets of four generalized Bell states in $\C{4}$. 

\section{Conclusion} \label{conclusion} Based on an upper bound of locally accessible information, we formulated a very simple necessary condition for the perfect distinguishability of a set of MES by LOCC. This necessary condition is shown to be stronger than the well-known OP condition. This was illustrated in example \ref{ens2}, where it was shown that the OP condition failed to conclude whether the corresponding states were locally distinguishable or not, whereas our condition successfully determined that the states were locally indistinguishable. Also, we tested our necessary condition \textbf{R} for the local distinguishability on all sets of four generalized Bell states in $\C{4}$, and obtained those sets which fail the necessary condition \textbf{R} - these sets are given in tables \ref{tab1} and \ref{tab2}. Surprisingly, we discovered that all the remaining sets are perfectly distinguishable by one-way LOCC using only projective measurements - we explicitly obtained the LOCC protocol for perfect distinguishability for all of them (see Appendix \ref{app}). That there is no protocol which involves two-way LOCC is interestingly similar to the result in \cite{TY15}, where it was shown that two-way LOCC doesn't play any distinguished role in the perfect distinguishability of a set of four 
ququad-ququad lattice states in $\C{4}$. 

In ref \cite{Y12} a set of four ququad-ququad lattice states were shown to be indistinguishable by PPT preserving operations. It was recently shown that this is the \emph{only} such set among the sets of ququad-ququad lattice states, which isn't perfectly distinguishable by LOCC \cite{TY15}. We also tested our necessary condition \textbf{R} on the aforementioned set of states and found that they do not satisfy the necessary condition. Thus, condition \textbf{R} is also sufficien to determine the local distinguishability for ququad-ququad lattice states. In fact, our condition is more general since the condition for the local distinguishability for ququad-ququad lattice states given in \cite{TY15} is particularly specific to ququad-ququad lattice states whereas our condition \textbf{R} applies generally to any set of pairwise orthogonal MES. 

It would be interesting to know if the same behaviour can be exhibited for $d \geq 5$ as well, i.e., if the upper bound on locally accessible information is the dividing line between sets of $d$ generalized Bell states in $\C{d}$ which are perfectly distinguishable by LOCC and those sets which are not. Moreover, it is interesting to know in this context whether all those sets, which are perfectly distinguishable, are so by one-way LOCC involving projective measurement.

The condition \textbf{R} may turn out to be sufficient not just for generalized Bell states or ququad-ququad lattice states, but for some other specific classes of pairwise orthogonal MES. It would be interesting to find out the existence of such a class of states. 

\section*{Acknowledgments} Ramij Rahman acknowledges support from UGC (University Grants Commission, Govt. of India) Start-Up Grant. The authors would also like to thank Somshubhro Bandyopadhyay for useful discussions. A major part of the work was done when TS was a Ph.D. student at the Optics \& Quantum Information Group, The Institute of Mathematical Sciences, Chennai.

\begin{appendices}

\section{For $d=4$, the number of equivalence classes is $122$}
\label{122}

For any choice of $n, m, l, k \in \mathbb{Z}_4$, note that there is a unique local unitary (upto a global phase) on Alice's side $U_{n-l,m-k}$ such that

 $$\ket{\psi_{nm}^{(4)}} = \left( U_{n-l,k-m} \otimes \mathbb{1}_B \right) \ket{\psi_{lk}^{(4)}},$$
 
 where $$U_{nm} = \sum_{j \in \mathbb{Z}_4} e^{\frac{2 \pi i j n}{4}} \ketbra{j \oplus_4 m}{j}.$$

Hence, the sets $\{ \ket{\psi_{n_1m_1}^{(4)}}$, $\ket{\psi_{n_2m_2}^{(4)}}$, $\ket{\psi_{n_3m_3}^{(4)}},$ $\ket{\psi_{n_4m_4}^{(4)}} \}$ and $\{ \ket{\psi_{l_1k_1}^{(4)}},$ $\ket{\psi_{l_2k_2}^{(4)}},$ $\ket{\psi_{l_3k_3}^{(4)}},$ $\ket{\psi_{l_4k_4}^{(4)}} \}$ are unitarily equivalent if and only if there is a permutation $\sigma \in S_4$ such that $n_i - l_{\sigma(i)}$ and $m_i - k_{\sigma(i)}$ are invariant over $i \in \mathbb{Z}_4$. This condition can then be used to write a simple computer program to obtain the number of equivalence classes, in which sets of states in the same class are equivalent upto a local unitary, and sets of states belonging to different classes aren't. Using such a computer program, we found that $\binom{16}{4}$ such distinct sets can be partitioned into $122$ equivalence classes. 

\section{LOCC Protocols for Various Sets of Generalized Bell States}

\label{app}

Among the $122$ equivalence classes of sets of $4$ generalized Bell states, we here give a list of $83$ equivalence classes which satisfy the condition \textbf{R}. Each equivalence class is represented by a set of four generalized Bell states which is contained in it. While satisfying the condition \textbf{R} doesn't necessarily imply that any of these sets of generalized Bell states should be perfectly locally distinguishable, surprisingly, we find that that is indeed the case, and that too by one-way LOCC using only projective measurements. Along with each set of states we also give the one-way LOCC protocol for their perfect local distinguishability.\smallbreak
 
\begin{thm}
\label{set0} 
Any set of four generalized Bell states in $\C{4}$ of the form \small$\{ \ket{\psi_{a 0}^{(4)}},\ket{\psi_{b1}^{(4)}},\ket{\psi_{c2}^{(4)}},\ket{\psi_{d3}^{(4)}} \} $\normalsize, where $a,b,c,d \in \{0,1,2,3\}$, can be discriminated by one-way LOCC using only projective measurements.  Similarly, any set of four generalized bell states in $\C{4}$ which is of the form \small$\{ \ket{\psi_{0a}^{(4)}},\ket{\psi_{1b}^{(4)}},\ket{\psi_{2c}^{(4)}},\ket{\psi_{3d}^{(4)}} \} $\normalsize, where $a,b,c,d \in \{0,1,2,3\}$, can be discriminated by one-way LOCC using only projective measurements.
\end{thm}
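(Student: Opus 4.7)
The plan is to construct an explicit one-way LOCC protocol, using only projective measurements, for each of the two families separately, exploiting the duality between the phase index $n$ and the shift index $m$ in definition \ref{def1}. The key observation is that the two families are related by swapping the computational and Fourier bases on each subsystem.

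For the first family $\{\ket{\psi_{a0}^{(4)}},\ket{\psi_{b1}^{(4)}},\ket{\psi_{c2}^{(4)}},\ket{\psi_{d3}^{(4)}}\}$, I would have Alice perform a projective measurement in the computational basis $\{\ket{j}_A\}_{j=0}^{3}$. Directly from equation \eqref{GenBell}, conditioning on outcome $j$ sends $\ket{\psi_{nm}^{(4)}}$ to $e^{2\pi i j n/4}\ket{j\oplus_4 m}_B$ on Bob's side (each outcome occurring with uniform probability $1/4$, since Alice's marginals are maximally mixed). Because the second indices in the given set take the four distinct values $0,1,2,3$, the four conditional Bob states sit on four pairwise distinct computational basis vectors, independently of $j$. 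Bob then measures in the computational basis, reads off $m$, and identifies the input state uniquely, since $m$ determines the pair $(n,m)$ within this family.

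For the second family $\{\ket{\psi_{0a}^{(4)}},\ket{\psi_{1b}^{(4)}},\ket{\psi_{2c}^{(4)}},\ket{\psi_{3d}^{(4)}}\}$, the same idea applies in the Fourier-conjugate basis. I would instruct Alice to measure in the Fourier basis $\ket{\tilde{k}}_A := \frac{1}{2}\sum_{j=0}^{3} e^{-2\pi i kj/4}\ket{j}_A$. A short calculation, performed by relabelling the summation index $j \mapsto l \ominus_4 m$ to absorb the shift $m$ into an overall phase, shows that conditioning on outcome $k$ sends $\ket{\psi_{nm}^{(4)}}$ to, up to a global phase, the Fourier basis vector on Bob's side with index $-(k+n)\bmod 4$. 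Since the first indices again exhaust the four distinct values $0,1,2,3$ across the set, the four conditional Bob states are four pairwise distinct Fourier basis vectors, independently of $k$. Bob then measures in the Fourier basis and recovers $n$, which uniquely identifies the input.

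The only mild obstacle is bookkeeping in the second case, specifically verifying that the $m$-shift really does factor out as a global phase after the change of summation variable, leaving a clean Fourier basis vector. Once this identity is in hand the two cases mirror each other under $X \leftrightarrow Z$ duality, and the proof requires no machinery beyond definition \ref{def1} and the mutual unbiasedness of the two bases $\{\ket{j}\}$ and $\{\ket{\tilde{j}}\}$.
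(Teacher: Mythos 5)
Your proposal is correct and is essentially the paper's own protocol: Alice measures in the computational basis for the family with distinct shift indices and in the Fourier basis for the family with distinct phase indices, after which Bob's conditional states are distinct computational (respectively Fourier) basis vectors that he distinguishes by a projective measurement. The only difference is a harmless sign convention in your Fourier basis, which merely relabels Bob's outcomes relative to the paper's $\ket{v_j}$ vectors.
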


\begin{proof}
Consider first the set \small$\{ \ket{\psi_{a 0}^{(4)}},\ket{\psi_{b1}^{(4)}},\ket{\psi_{c1}^{(4)}},\ket{\psi_{d1}^{(4)}} \} $\normalsize. Alice starts with a rank-one projective measurement in the following orthonormal basis: \smallbreak
 $ \ket{u_0} = \;\ket{0}$, $ \ket{u_1} = \;\ket{1}$, $ \ket{u_2} = \;\ket{2}$, $ \ket{u_3} = \;\ket{3}$. 
\medbreak
For the $k$-th outcome of Alice's measurement, the post measurement set of joint states will be of the following form: 
\begin{align*}
&\{ \ket{\psi_{a 0}^{(4)}},\ket{\psi_{b1}^{(4)}},\ket{\psi_{c2}^{(4)}},\ket{\psi_{d3}^{(4)}} \} \notag \\ \longrightarrow & \{ \ket{k} \ket{k},\ket{k} \ket{k\oplus_4 1},\ket{k} \ket{k\oplus_4 2},\ket{k} \ket{k\oplus_4 3} \}. 
\end{align*}

Thus, once Alice tells Bob her measurement's outcome, he needs to perform measurement in the $\{ \ket{j}\}_{j = 0}^{4}$ basis to perfectly distinguish between the states in the set.  \medbreak

Next, consider a set of states of the form $\{ \ket{\psi_{0a}^{(4)}}$, $\ket{\psi_{1b}^{(4)}}$, $\ket{\psi_{2c}^{(4)}}$, $\ket{\psi_{3d}^{(4)}} \} $.
Alice starts by performing a rank-one projective measurement corresponding to the following orthonormal basis: \medbreak
\small
$ \ket{u_0} = \;\frac{1}{2}\sum_{j=0}^{3} \ket{j}$,
$ \ket{u_1} = \;\frac{1}{2}\sum_{j=0}^{3} e^{\frac{ij \pi }{2}} \ket{j}$,
$ \ket{u_2} = \;\frac{1}{2}\sum_{j=0}^{3} (-1)^j \ket{j}$,
$ \ket{u_3} = \;\frac{1}{2}\sum_{j=0}^{3} e^{\frac{3ij \pi }{2}} \ket{j}$. \medbreak
\normalsize
For the $k$-th outcome of Alice's measurement, the post measurement set of joint states will be of the following form: 

\begin{align*}
& \{ \ket{\psi_{0a}^{(4)}},\ket{\psi_{1b}^{(4)}},\ket{\psi_{2c}^{(4)}},\ket{\psi_{3d}^{(4)}} \}\notag  \\\longrightarrow & \{ \ket{u_k}\ket{v_k},\ket{u_k}\ket{v_{k\oplus_4 1}},\ket{u_k}\ket{v_{k\oplus_4 2}},\ket{u_k}\ket{v_{k\oplus_4 3}}\},
\end{align*}

where \medbreak
$ \ket{v_0} = \;\frac{1}{2} \sum_{j=0}^{3} e^{\frac{(4 \ominus_4 k)ij \pi}{2}} \ket{j}$,  

$ \ket{v_1} = \;\frac{1}{2}\sum_{j=0}^{3} e^{\frac{(1 \ominus_4 k)ij \pi}{2}} \ket{j}$, 

$ \ket{v_2} = \;\frac{1}{2}\sum_{j=0}^{3} e^{\frac{(2 \ominus_4 k)ij \pi}{2}} \ket{j}$, 

$ \ket{v_3} = \;\frac{1}{2}\sum_{j=0}^{3} e^{\frac{(3 \ominus_4 k)ij \pi}{2}} \ket{j}$. 
 
Thus, once Alice tells Bob her measurement's outcome, he needs to perform measurement in the $\{ \ket{v_j}\}_{j=0}^{3}$ basis to perfectly distinguish between the states in the set. \end{proof}

\begin{thm} \label{set1} States in each set in the following two tables are perfectly distinguishable by one-way LOCC using only projective measurement: \smallbreak
\small
\begin{tabular}{| l l l|}
\hline 
\multicolumn{3}{|c|}{ Each of the following $10$ sets contains the states}\\
\multicolumn{3}{|c|}{$\ket{\psi_{00}^{(4)}}$ and $\ket{\psi_{01}^{(4)}}$; remaining states are listed below.}\\
\hline 
 &  &     \\
 $\{   \ket{\psi_{02}^{(4)}}$, $\ket{\psi_{10}^{(4)}}\}$, & 
 $\{   \ket{\psi_{02}^{(4)}}$, $\ket{\psi_{21}^{(4)}}\}$, & 
 $\{   \ket{\psi_{02}^{(4)}}$, $\ket{\psi_{32}^{(4)}}\}$, \\
 $\{   \ket{\psi_{10}^{(4)}}$, $\ket{\psi_{13}^{(4)}}\}$, &
 $\{   \ket{\psi_{10}^{(4)}}$, $\ket{\psi_{20}^{(4)}}\}$, &
 $\{   \ket{\psi_{10}^{(4)}}$, $\ket{\psi_{31}^{(4)}}\}$, \\ 
 $\{   \ket{\psi_{13}^{(4)}}$, $\ket{\psi_{21}^{(4)}}\}$, &
 $\{   \ket{\psi_{20}^{(4)}}$, $\ket{\psi_{21}^{(4)}}\}$, &
 $\{   \ket{\psi_{20}^{(4)}}$, $\ket{\psi_{32}^{(4)}}\}$, \\
 $\{   \ket{\psi_{21}^{(4)}}$, $\ket{\psi_{31}^{(4)}}\}$. &
 & \\
  & &    \\
\hline 
\end{tabular} \bigbreak
 
\begin{tabular}{|l l l  |}
\hline
\multicolumn{3}{|c|}{ Each of the following $4$ sets contains the states}\\
\multicolumn{3}{|c|}{$\ket{\psi_{00}^{(4)}}$ and $\ket{\psi_{02}^{(4)}}$; remaining states are listed below.}\\
\hline 
 &   &    \\
 $\{  \ket{\psi_{10}^{(4)}}$, $\ket{\psi_{12}^{(4)}}\}$, &
 $\{  \ket{\psi_{10}^{(4)}}$, $\ket{\psi_{23}^{(4)}}\}$, &
 $\{  \ket{\psi_{10}^{(4)}}$, $\ket{\psi_{30}^{(4)}}\}$, \\
 $\{  \ket{\psi_{21}^{(4)}}$, $\ket{\psi_{30}^{(4)}}\}$. &
 & \\  &   &  \\
\hline 
\end{tabular} 
 
\end{thm}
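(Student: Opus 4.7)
The plan is to exhibit, for each of the $14$ sets listed in the two tables, an explicit one-way LOCC protocol consisting of (i) a rank-one projective measurement by Alice in some orthonormal basis $\{\ket{u_k}\}_{k=0}^{3}$, (ii) classical communication of Alice's outcome $k$ to Bob, and (iii) a projective measurement by Bob that identifies the original state. This mirrors the structure already used to prove Theorem \ref{set0}; only the choice of Alice's basis changes from set to set.

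The underlying computation is the same in every case. If $\ket{u_k}=\sum_j c^{(k)}_j\ket{j}_A$, then the (unnormalized) state on Bob's side after Alice's outcome $k$ on input $\ket{\psi^{(4)}_{nm}}$ is
\begin{equation*}
\ket{\phi^{(k)}_{nm}}\propto \sum_{j=0}^{3}\bigl(c^{(k)}_j\bigr)^{*}\, e^{2\pi i jn/4}\,\ket{j\oplus_4 m}_B .
\end{equation*}
For the protocol to succeed it is necessary and sufficient that, for each outcome $k$, the four vectors $\{\ket{\phi^{(k)}_{n_im_i}}\}_{i=1}^{4}$ corresponding to the four members of the set be pairwise orthogonal on Bob's side; Bob then measures in the ONB obtained by normalising these four vectors (and, if needed, completing the basis trivially). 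Each pair of states contributes one orthogonality condition $\langle \phi^{(k)}_{n_im_i}|\phi^{(k)}_{n_jm_j}\rangle=0$, which, using the shift-by-$(m_i-m_j)$ and phase-rotation-by-$(n_i-n_j)$ symmetries of the generalized Bell states, reduces to a discrete Fourier-type linear sum in the coefficients $c^{(k)}_j$.

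My approach for a given set is therefore: (1) write out the six pairwise orthogonality conditions as linear/quadratic constraints on the coefficients of Alice's basis vectors; (2) search within a structured family of candidate bases -- the standard basis, the $\mathbb{Z}_4$-Fourier basis, and tensor-product bases built from $2\times 2$ Hadamard or phased-Hadamard blocks acting on pairings such as $\{\ket{0},\ket{1}\},\{\ket{2},\ket{3}\}$ or $\{\ket{0},\ket{2}\},\{\ket{1},\ket{3}\}$ -- until a basis satisfying all six conditions for all four outcomes is found; (3) read off Bob's four conditional bases from the resulting $\ket{\phi^{(k)}_{nm}}$'s. Where possible, one can reuse a single basis across several sets that share the same index pattern up to the action of a unitary $U_{nm}$ from Appendix \ref{122}, since such actions permute the labels in a way that preserves the protocol's structure.

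The main obstacle is that, unlike Theorem \ref{set0} where the uniformity of the index patterns ($m$ all distinct, or $n$ all distinct) meant that the standard and Fourier bases did the job universally, the $14$ sets here have mixed $(n,m)$-patterns and no single closed-form basis works for all of them. Thus the proof is inherently case-by-case: one must test structured ansatz bases, tailored to the particular pattern of differences $(n_i-n_j,\,m_i-m_j)$, and verify the orthogonality of the resulting conditional states on Bob's side. Once a correct basis is exhibited, the verification is a routine inner-product calculation in $\mathbb{C}^4$ and the description of Bob's follow-up measurement is immediate. Appendix \ref{app} collects the resulting protocols for all $83$ equivalence classes that satisfy condition \textbf{R}, of which these $14$ form a subset.
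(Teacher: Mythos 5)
There is a genuine gap: your proposal is a search plan, not a proof. The entire mathematical content of this theorem is the exhibition of a concrete measurement basis for Alice together with the verification that, for each of her four outcomes and each of the $14$ listed sets, Bob's post-measurement relative states are pairwise orthogonal. You correctly set up the framework (the conditional state $\ket{\phi^{(k)}_{nm}}\propto\sum_j (c^{(k)}_j)^{*}e^{2\pi i jn/4}\ket{j\oplus_4 m}_B$ and the orthogonality criterion are right, and this is indeed the same one-way, projective-measurement strategy the paper uses), but you never produce the basis or carry out the inner-product checks, deferring instead to an unspecified search over ansatz families. As written, nothing in your argument certifies that any basis satisfying all six pairwise conditions for all four outcomes actually exists for these particular index patterns, so the theorem is not established.

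Moreover, your structural claim that ``no single closed-form basis works for all of them'' and that the proof must be case-by-case is contrary to what the paper does: a single basis,
$\ket{u_1}=\tfrac{1}{2}\bigl(-e^{i\pi/4}\ket{0}+\ket{1}+e^{i\pi/4}\ket{2}+\ket{3}\bigr)$,
$\ket{u_2}=\tfrac{1}{2}\bigl(e^{i\pi/4}\ket{0}+\ket{1}-e^{i\pi/4}\ket{2}+\ket{3}\bigr)$,
$\ket{u_3}=\tfrac{1}{2}\bigl(e^{i3\pi/4}\ket{0}-\ket{1}+e^{i3\pi/4}\ket{2}+\ket{3}\bigr)$,
$\ket{u_4}=-\tfrac{1}{2}\bigl(e^{i3\pi/4}\ket{0}-\ket{1}-e^{i3\pi/4}\ket{2}+\ket{3}\bigr)$,
handles all $14$ sets in this theorem simultaneously (the case-by-case tailoring happens only across the different theorems of Appendix \ref{app}, not within this one). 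To close the gap you would need to exhibit such a basis explicitly (this one, or another) and then perform, or at least reduce to a finite checkable computation, the orthogonality verification of Bob's conditional states for every set and every outcome; your step (2) as stated leaves precisely that existence question open.
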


\begin{proof}
Alice performs a rank-one projective measurement in the following orthonormal basis: \smallbreak

$ \ket{u_1} = \frac{1}{2} \left(- e^{i \frac{\pi}{4}}\ket{0}+\ket{1}+ e^{i \frac{\pi}{4}}\ket{2}+\ket{3}\right)$, 

$ \ket{u_2} = \frac{1}{2}\left( e^{i \frac{\pi}{4}}\ket{0}+\ket{1}- e^{i \frac{\pi}{4}}\ket{2}+\ket{3}\right)$,
 
$ \ket{u_3} = \frac{1}{2}\left( e^{i \frac{3\pi}{4}}\ket{0}-\ket{1}+ e^{i \frac{3\pi}{4}}\ket{2}+\ket{3}\right)$,
 
$ \ket{u_4} = -\frac{1}{2} \left( e^{i \frac{3\pi}{4}}\ket{0}-\ket{1}- e^{i \frac{3\pi}{4}}\ket{2}+\ket{3}\right)$. \smallbreak

For each measurement outcome, Bob's PMRS are orthogonal and hence, distinguishable.

\end{proof}

\begin{thm} \label{set2} The following sets are distinguishable by one-way LOCC using only projective measurement: \medbreak
\small
\begin{tabular}{|c c   |}
\hline
\multicolumn{2}{|c|}{ Each of the following  $8$ sets contains the state}\\
\multicolumn{2}{|c|}{$\ket{\psi_{00}^{(4)}}$, the rest are given in the table.}\\
\hline
&  \\
 $\{ \ket{\psi_{01}^{(4)}}$, $\ket{\psi_{02}^{(4)}}$, $\ket{\psi_{11}^{(4)}}\}$, & 
 $\{ \ket{\psi_{01}^{(4)}}$, $\ket{\psi_{02}^{(4)}}$, $\ket{\psi_{31}^{(4)}}\}$, \\
 $\{ \ket{\psi_{01}^{(4)}}$, $\ket{\psi_{10}^{(4)}}$, $\ket{\psi_{23}^{(4)}}\}$, & 
 $\{ \ket{\psi_{01}^{(4)}}$, $\ket{\psi_{11}^{(4)}}$, $\ket{\psi_{22}^{(4)}}\}$, \\ 
 $\{ \ket{\psi_{01}^{(4)}}$, $\ket{\psi_{22}^{(4)}}$, $\ket{\psi_{31}^{(4)}}\}$, & 
 $\{ \ket{\psi_{01}^{(4)}}$, $\ket{\psi_{23}^{(4)}}$, $\ket{\psi_{30}^{(4)}}\}$, \\
 $\{ \ket{\psi_{02}^{(4)}}$, $\ket{\psi_{11}^{(4)}}$, $\ket{\psi_{21}^{(4)}}\}$, & 
 $\{ \ket{\psi_{02}^{(4)}}$, $\ket{\psi_{21}^{(4)}}$, $\ket{\psi_{31}^{(4)}}\}$  \\
&  \\
 \hline
\end{tabular}
\end{thm}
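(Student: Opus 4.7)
The plan is to extend the constructive strategy of Theorems \ref{set0} and \ref{set1}: for each of the eight sets listed, I would exhibit an explicit orthonormal basis $\{\ket{u_k}\}_{k=0}^{3}$ on Alice's side such that the rank-one projective measurement in this basis leaves, for every outcome $k$, four pairwise orthogonal post-measurement reduced states on Bob's side. Since orthogonal pure states on a single system are perfectly distinguishable by a projective measurement, Alice's broadcast of her outcome then enables Bob to finish the protocol with a single projective measurement in the matching basis.

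First I would set up the reduction. Projecting $\ket{\psi_{nm}^{(4)}} = \frac{1}{2}\sum_{j=0}^{3} e^{i\pi j n/2}\ket{j}_A \ket{j \oplus_4 m}_B$ onto $\proj{u_k}_A \otimes \mathbb{1}_B$ produces on Bob's side the (unnormalized) vector $\sum_{j=0}^{3} \braket{u_k}{j}\, e^{i\pi j n/2}\, \ket{j \oplus_4 m}_B$. Hence the inner product between the post-measurement Bob states associated to $(n,m)$ and $(n',m')$ reduces, after reindexing, to a function purely of the relative shifts $(n-n', m-m')$ and the amplitudes $\braket{j}{u_k}$. Requiring this inner product to vanish for every pair of states in the set and for every $k$ yields a finite linear system in the basis coefficients that the chosen $\{\ket{u_k}\}$ must satisfy.

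Next I would try to solve this system by inspection. A natural first attempt is the phase-twisted basis used in Theorem \ref{set1} (the one built from $\pm e^{i\pi/4}$ and $\pm e^{i3\pi/4}$ twists of the $4 \times 4$ Hadamard), since several of the sets in Theorem \ref{set2} include $\ket{\psi_{00}^{(4)}}$ together with indices overlapping those already handled; if this single basis does not simultaneously work for all eight sets, I would partition them by relative-shift profile and construct a tailored Fourier-type or diagonally-twisted basis per subfamily. Where Alice-side bases prove awkward for a particular set, I would instead have Bob initiate the protocol with an analogously constructed basis, invoking the symmetry between Alice's and Bob's roles (essentially swapping $n \leftrightarrow m$ and applying Lemma \ref{lem2} on Bob's side).

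The main obstacle I anticipate is the absence of a single clean algebraic pattern unifying the eight sets: unlike the sets of Theorem \ref{set0}, whose shifts form an entire row or column of the index lattice, and unlike the sets of Theorem \ref{set1}, whose fourteen configurations all yielded to one common basis, the sets here are triples along with $\ket{\psi_{00}^{(4)}}$ with heterogeneous relative-shift structures. I therefore expect the proof to be an enumeration of cases in which each of the eight sets is handled by a carefully chosen basis; once that basis is fixed, the verification is a routine Gram-matrix check in each measurement branch. Producing the right guess for the basis in the less symmetric cases is the only real difficulty; the bookkeeping is straightforward.
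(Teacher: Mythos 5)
Your overall strategy coincides with the paper's: Alice performs a rank-one projective measurement chosen so that, for every outcome, Bob's four conditional pure states are pairwise orthogonal, after which Alice's classical message plus one projective measurement by Bob completes the discrimination; your reduction of this requirement to orthogonality conditions on the coefficients $\braket{u_k}{j}$ is also set up correctly. The genuine gap is that you never exhibit a basis that actually satisfies those conditions for the eight listed sets, and that exhibition is essentially the entire nontrivial content of the theorem. Moreover, the candidate you propose to try first, the basis of Theorem \ref{set1}, demonstrably fails here: for the set $\{\ket{\psi_{00}^{(4)}},\ket{\psi_{01}^{(4)}},\ket{\psi_{02}^{(4)}},\ket{\psi_{11}^{(4)}}\}$ and the outcome $\ket{u_1}=\tfrac{1}{2}\left(-e^{i\pi/4}\ket{0}+\ket{1}+e^{i\pi/4}\ket{2}+\ket{3}\right)$, Bob's conditional states for $\ket{\psi_{00}^{(4)}}$ and $\ket{\psi_{11}^{(4)}}$ have inner product proportional to $e^{i3\pi/4}-e^{-i\pi/4}=-2e^{-i\pi/4}\neq 0$, so even orthogonality preservation breaks on the very first branch. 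Your fallback (``tailored Fourier-type or diagonally-twisted bases per subfamily,'' or letting Bob initiate) is a search plan rather than a proof; until a concrete basis is produced and the Gram-matrix check is carried out for every outcome and every one of the eight sets, the statement is not established.

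For comparison, the paper's proof is a single explicit basis that works simultaneously for all eight sets: $\ket{u_{1}}=\tfrac{1}{\sqrt2}\left(-i\ket{0}+\ket{2}\right)$, $\ket{u_{2}}=\tfrac{1}{\sqrt2}\left(i\ket{0}+\ket{2}\right)$, $\ket{u_{3}}=\tfrac{1}{\sqrt2}\left(-i\ket{1}+\ket{3}\right)$, $\ket{u_{4}}=\tfrac{1}{\sqrt2}\left(i\ket{1}+\ket{3}\right)$. The structural feature your proposal does not identify is that each basis vector is supported on one of the two-dimensional subspaces $\mathrm{span}\{\ket{0},\ket{2}\}$ or $\mathrm{span}\{\ket{1},\ket{3}\}$ with relative phase $\pm i$; consequently each outcome sends every state of these sets to a Bob state supported on only two levels, and the residual phases $(-1)^n$ and $\pm i$ render those two-level states orthogonal, with no per-set case analysis and no Bob-initiated protocol needed. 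Supplying such a basis (this one, or any other that passes the full check) is what your write-up still has to do.
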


\begin{proof}
Alice performs a rank-one projective measurement in the following orthonormal basis: \smallbreak
$ \ket{u_1} =  \frac{1}{\sqrt{2}} \left(-i\ket{0}+\ket{2}\right)$,

$ \ket{u_2} =  \frac{1}{\sqrt{2}} \left(i\ket{0}+\ket{2}\right)$,  

$ \ket{u_3} =  \frac{1}{\sqrt{2}} \left(-i\ket{1}+\ket{3}\right)$, 

$ \ket{u_4} =  \frac{1}{\sqrt{2}} \left(i\ket{1}+\ket{3}\right)$.  \smallbreak
For each measurement outcome, Bob's PMRS are found to be orthogonal and hence can be distinguished.
\end{proof}

\begin{thm} \label{set3} The following sets are distinguishable by one-way LOCC using only projective measurement: \medbreak
\small
\begin{tabular}{|c c  |}
\hline
\multicolumn{2}{|c|}{ Each of the following  $8$ sets contains the state}\\
\multicolumn{2}{|c|}{$\ket{\psi_{00}^{(4)}}$, the rest are given in the table.}\\
\hline
&   \\
$\{  \ket{\psi_{01}^{(4)}}$, $\ket{\psi_{02}^{(4)}}$, $\ket{\psi_{12}^{(4)}}\}$, &  
$\{  \ket{\psi_{01}^{(4)}}$, $\ket{\psi_{02}^{(4)}}$, $\ket{\psi_{30}^{(4)}}\}$, \\ 
$\{  \ket{\psi_{01}^{(4)}}$, $\ket{\psi_{11}^{(4)}}$, $\ket{\psi_{21}^{(4)}}\}$, & 
$\{  \ket{\psi_{01}^{(4)}}$, $\ket{\psi_{12}^{(4)}}$, $\ket{\psi_{20}^{(4)}}\}$, \\ 
$\{  \ket{\psi_{01}^{(4)}}$, $\ket{\psi_{20}^{(4)}}$, $\ket{\psi_{30}^{(4)}}\}$, & 
$\{  \ket{\psi_{01}^{(4)}}$, $\ket{\psi_{21}^{(4)}}$, $\ket{\psi_{33}^{(4)}}\}$, \\ 
$\{  \ket{\psi_{02}^{(4)}}$, $\ket{\psi_{10}^{(4)}}$, $\ket{\psi_{21}^{(4)}}\}$, & 
$\{  \ket{\psi_{02}^{(4)}}$, $\ket{\psi_{21}^{(4)}}$, $\ket{\psi_{32}^{(4)}}\}$. \\
&  \\ 
\hline
\end{tabular}
\end{thm}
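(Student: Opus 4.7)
The plan is to follow the same template used in the proofs of Theorems \ref{set1} and \ref{set2}: Alice performs a single, carefully chosen rank-one projective measurement $\{ \proj{u_k} \}_{k=1}^{4}$ on her subsystem, and for each outcome $k$ the four post-measurement reduced states on Bob's side turn out to be pairwise orthogonal, after which Bob completes the discrimination with a single projective measurement in the appropriate basis. Since the necessary condition \textbf{R} is already known to hold for all $83$ representative sets, this template has a reasonable chance of succeeding.

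First I would write down, for each of the $8$ sets and each prospective Alice outcome $\ket{u_k}$, Bob's (unnormalised) post-measurement state
\begin{equation*}
\ket{\phi^{(k)}_{nm}} \;=\; \tfrac{1}{2}\sum_{j=0}^{3} \braket{u_k}{j}\, e^{\frac{2\pi i j n}{4}} \ket{j \oplus_4 m},
\end{equation*}
obtained from equation \eqref{GenBell}. The task then reduces to choosing the ONB $\{\ket{u_k}\}_{k=1}^{4}$ so that, for every set in the list and every outcome $k$, the six pairwise inner products $\braket{\phi^{(k)}_{n'm'}}{\phi^{(k)}_{nm}}$ vanish on Bob's side. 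Once such a basis is in hand, the verification is a finite, purely algebraic check, and Bob's distinguishing measurement is just the projective measurement in the ONB obtained from the non-zero components of the $\ket{\phi^{(k)}_{nm}}$.

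The main obstacle is finding \emph{one} Alice basis that works simultaneously for all $8$ sets. Inspecting the $(n,m)$-labels one notices a common structural feature: in each set the $m$-values hit each coset of $\{0,2\}$ in $\mathbb{Z}_4$, and the $n$-values contain enough variety that no pair of states is automatically distinguished by Bob's support alone. This suggests trying a basis with the ``$\{0,2\}$/$\{1,3\}$ mixing" flavour of Theorem \ref{set2}, but with relative phases chosen so that the mod-$2$ structure in the second index is broken in a uniform way. The guess I would start from is therefore a basis of the form
\begin{equation*}
\ket{u_k} \;=\; \tfrac{1}{\sqrt{2}}\bigl(\,e^{i\theta_k}\ket{p_k} \pm \ket{q_k}\,\bigr),
\end{equation*}
with $(p_k,q_k) \in \{(0,1),(0,3),(2,1),(2,3)\}$ (or an analogous partition) and $\theta_k$ chosen to kill the cross terms arising from the $n=1,2,3$ phases simultaneously across all $8$ sets.

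If a uniform basis of this type is not found, the fallback is to partition the $8$ sets into smaller subfamilies according to their common $(n,m)$ structure and give a separate basis for each subfamily, still within the one-way LOCC / projective-measurement template. Either way, the proof will conclude exactly as in Theorems \ref{set1} and \ref{set2} once the candidate basis is fixed: state Alice's measurement, record the four possible outcomes, and check that Bob's PMRS are pairwise orthogonal for each outcome and each of the $8$ sets.
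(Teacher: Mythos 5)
Your proposal correctly identifies the template (Alice makes one rank-one projective measurement; for every outcome Bob's post-measurement reduced states must be pairwise orthogonal), but it stops exactly where the actual proof begins: the entire content of the paper's argument is the explicit exhibition of a working orthonormal basis, and you never produce one. "Choose the ONB so that the six inner products vanish" plus a fallback of "partition the eight sets into subfamilies" is a search plan, not a proof; as written, nothing is verified for any of the eight sets.

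Worse, your leading ansatz is provably a dead end. You propose basis vectors of the two-component form $\ket{u_k}=\tfrac{1}{\sqrt{2}}\bigl(e^{i\theta_k}\ket{p_k}\pm\ket{q_k}\bigr)$, in the spirit of Theorem \ref{set2}. Take the first listed set, $\{\ket{\psi_{00}^{(4)}},\ket{\psi_{01}^{(4)}},\ket{\psi_{02}^{(4)}},\ket{\psi_{12}^{(4)}}\}$, and any Alice vector $\ket{u}$ supported on only two computational basis states, with $c_j=\braket{u}{j}$. Bob's unnormalised conditional states are $\ket{\phi_{nm}}=\sum_j c_j\, e^{\frac{2\pi i jn}{4}}\ket{j\oplus_4 m}$. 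A direct check of all six possible supports shows orthogonality always breaks: e.g.\ for support $\{0,1\}$ one gets $\braket{\phi_{02}}{\phi_{12}}=|c_0|^2+i|c_1|^2\neq 0$; for support $\{0,2\}$ the pair $(\phi_{00},\phi_{02})$ forces $\mathrm{Re}(\overline{c_0}c_2)=0$ while $(\phi_{00},\phi_{12})$ forces $\mathrm{Im}(\overline{c_0}c_2)=0$, a contradiction; the supports $\{0,3\},\{1,2\},\{1,3\},\{2,3\}$ fail analogously. Hence no projective measurement whose outcomes are such two-term superpositions can preserve orthogonality for this set, so the "mixing flavour of Theorem \ref{set2}" cannot work here. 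What is actually needed (and what the paper supplies) is a basis of vectors with all four components of equal magnitude and judiciously chosen phases, namely $\ket{u_1}=\tfrac{1}{2}\bigl(-e^{i\frac{3\pi}{4}}\ket{0}+\ket{1}+e^{i\frac{3\pi}{4}}\ket{2}+\ket{3}\bigr)$ and its three companions with the phases $e^{\pm i\frac{3\pi}{4}},e^{\pm i\frac{\pi}{4}}$ arranged as in the paper, followed by the finite orthogonality check for each outcome and each of the eight sets. Until you exhibit such a basis (or a valid subfamily-by-subfamily replacement) and carry out that check, the statement remains unproved.
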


\begin{proof}

Alice performs a rank-one projective measurement in the following orthonormal basis: \smallbreak
\small
$ \ket{u_1} = \frac{1}{2} \left( -e^{i \frac{3\pi}{4}}\ket{0}+\ket{1}+ e^{i \frac{3\pi}{4}}\ket{2}+\ket{3} \right)$,

$ \ket{u_2} = \frac{1}{2} \left( e^{i \frac{3\pi}{4}}\ket{0}+\ket{1}-e^{i \frac{3\pi}{4}}\ket{2}+\ket{3}\right)$,
 
$ \ket{u_3} = \frac{1}{2} \left( e^{i \frac{\pi}{4}}\ket{0}-\ket{1}+e^{i \frac{\pi}{4}}\ket{2}+\ket{3} \right)$,
$ \ket{u_4} = \frac{1}{2} \left( -e^{i \frac{\pi}{4}}\ket{0}-\ket{1}- e^{i \frac{\pi}{4}}\ket{2}+\ket{3} \right)$. \medbreak
 
 For each measurement outcome, Bob's PMRS are found to be orthogonal and hence can be distinguished.

\end{proof}

\begin{thm} \label{set4} The following sets are distinguishable by one-way LOCC using only projective measurement: \medbreak
\small
\begin{tabular}{|c c  |}
\hline
\multicolumn{2}{|c|}{ Each of the following  $10$ sets contains the state}\\
\multicolumn{2}{|c|}{$\ket{\psi_{00}^{(4)}}$ the rest are given in the table.}\\
\hline
& \\
$\{  \ket{\psi_{01}^{(4)}}$, $\ket{\psi_{10}^{(4)}}$, $\ket{\psi_{11}^{(4)}}\}$, &
$\{  \ket{\psi_{01}^{(4)}}$, $\ket{\psi_{10}^{(4)}}$, $\ket{\psi_{30}^{(4)}}\}$, \\ 
$\{  \ket{\psi_{01}^{(4)}}$, $\ket{\psi_{10}^{(4)}}$, $\ket{\psi_{32}^{(4)}}\}$, & 
$\{  \ket{\psi_{01}^{(4)}}$, $\ket{\psi_{11}^{(4)}}$, $\ket{\psi_{12}^{(4)}}\}$, \\ 
$\{  \ket{\psi_{01}^{(4)}}$, $\ket{\psi_{11}^{(4)}}$, $\ket{\psi_{31}^{(4)}}\}$, & 
$\{  \ket{\psi_{01}^{(4)}}$, $\ket{\psi_{11}^{(4)}}$, $\ket{\psi_{33}^{(4)}}\}$, \\
$\{  \ket{\psi_{01}^{(4)}}$, $\ket{\psi_{12}^{(4)}}$, $\ket{\psi_{30}^{(4)}}\}$, & 
$\{  \ket{\psi_{01}^{(4)}}$, $\ket{\psi_{12}^{(4)}}$, $\ket{\psi_{32}^{(4)}}\}$, \\ 
$\{  \ket{\psi_{01}^{(4)}}$, $\ket{\psi_{13}^{(4)}}$, $\ket{\psi_{31}^{(4)}}\}$, & 
$\{  \ket{\psi_{01}^{(4)}}$, $\ket{\psi_{13}^{(4)}}$, $\ket{\psi_{33}^{(4)}}\}$. \\
& \\ 
\hline
\end{tabular}
\end{thm}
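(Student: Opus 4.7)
The plan is to continue the strategy used in Theorems \ref{set1}, \ref{set2}, and \ref{set3}: I would exhibit a single rank-one orthonormal basis $\{\ket{u_k}\}_{k=0}^{3}$ on Alice's side such that, for each of the ten sets in the statement and each outcome $k$, the four post-measurement states on Bob's side are pairwise orthogonal. Alice then announces her outcome $k$ classically, and Bob completes discrimination by a projective measurement in the basis formed by his four post-measurement vectors.

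First, I would compute Bob's (unnormalized) post-measurement vector for a general Alice basis element. Since
\begin{equation*}
\ket{\psi_{nm}^{(4)}} = \tfrac{1}{2}\sum_{j=0}^{3} e^{i\pi jn/2}\ket{j}_A\ket{j\oplus_4 m}_B,
\end{equation*}
projecting onto $\ket{u_k}$ on Alice's side gives
\begin{equation*}
(\bra{u_k}\otimes\mathbb{1}_B)\ket{\psi_{nm}^{(4)}} = \tfrac{1}{2}\sum_{j=0}^{3} \braket{u_k}{j}\, e^{i\pi jn/2}\ket{j\oplus_4 m}_B.
\end{equation*}
Lemma \ref{lem2} forces $|\braket{u_k}{j}|=1/2$ for every $j,k$, so writing $\braket{u_k}{j} = \tfrac{1}{2}\phi_{k,j}$ with $|\phi_{k,j}|=1$ reduces the pairwise-orthogonality condition on Bob's side, for each of the ten sets, to a small finite collection of phase relations indexed only by the differences $(n_i-n_{i'},\,m_i-m_{i'})$ of the labels.

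Next, guided by the bases in Theorems \ref{set1} and \ref{set3} (which use phases $e^{i\pi/4}$ and $e^{i3\pi/4}$ respectively), I would restrict attention to ans\"atze of the form $\phi_{k,j}\in\{\pm 1,\pm e^{i\theta}\}$ with $\theta$ a multiple of $\pi/4$, and search for a single choice of $\theta$ and sign pattern that simultaneously solves the orthogonality equations for all ten sets. The common presence of $\{\ket{\psi_{00}^{(4)}},\ket{\psi_{01}^{(4)}}\}$ in every set, together with the specific pattern of the remaining pairs (all drawn from the $n=1,2,3$ rows), heavily constrains the phases and narrows the search to a handful of candidate bases. With a candidate in hand, I would check set by set, using the explicit Bob-side expression above, that the four resulting vectors are pairwise orthogonal, and conclude one-way LOCC distinguishability with projective measurements on both sides.

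The main obstacle I expect is not the verification step, which is mechanical once the basis is fixed, but rather producing the guess that a \emph{single} Alice basis handles all ten sets at once. A priori each set could demand its own basis, and confirming that one basis suffices relies on a coincidence among the ten $(n,m)$-configurations that must be checked by hand. If no single basis exists, the fallback is to allow the basis to depend on the set (still a one-way projective LOCC protocol); but the parallel structure with Theorems \ref{set1}--\ref{set3} strongly suggests that a uniform basis is intended.
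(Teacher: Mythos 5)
Your strategy is the same as the paper's: choose one fixed rank-one orthonormal basis on Alice's side, have Alice announce the outcome, and let Bob finish with a projective measurement because his four (pure) post-measurement states are pairwise orthogonal for every outcome and every one of the ten sets. However, as written the proposal has a genuine gap: it never produces the basis. The entire substance of the paper's proof is the explicit exhibition of such a basis, namely $\ket{u_1}=\tfrac{1}{2}\left(\ket{0}-\ket{1}-\ket{2}+\ket{3}\right)$, $\ket{u_2}=\tfrac{1}{2}\left(-\ket{0}-\ket{1}+\ket{2}+\ket{3}\right)$, $\ket{u_3}=\tfrac{1}{2}\left(-\ket{0}+\ket{1}-\ket{2}+\ket{3}\right)$, $\ket{u_4}=\tfrac{i}{2}\left(\ket{0}+\ket{1}+\ket{2}+\ket{3}\right)$, together with the (mechanical) check that Bob's conditional states are orthogonal for all ten sets. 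Your ansatz family does contain this solution (it is a real sign pattern, i.e.\ effectively $\theta=0$ up to an irrelevant global phase), and your expectation that a single basis suffices is borne out, but a search plan plus a fallback option is not a proof; until the candidate is fixed and the orthogonality relations are verified, the theorem is not established.

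A secondary point: your claim that Lemma \ref{lem2} forces $|\braket{u_k}{j}|=1/2$ is not justified. Lemma \ref{lem2} only says Alice's post-measurement reduced states are proportional to $K_\alpha K_\alpha^\dag$, which for a rank-one projector is automatic and places no constraint on the basis; indeed, for the sets of Theorem \ref{set0} the (maximally biased) computational basis works. What actually drives the construction is the requirement that Bob's four conditional pure states be pairwise orthogonal for every outcome (equivalently, that his average post-measurement state be maximally mixed, as in the corollary to Theorem \ref{thm1}); unbiasedness of the paper's basis is a feature of the particular solution for these ten sets, not a consequence of Lemma \ref{lem2}. So the phase relations you plan to solve are the right equations, but they should be derived from Bob-side orthogonality, and then solved explicitly, to close the argument.
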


\begin{proof}

Alice performs a rank-one projective measurement in the following orthonormal basis: \smallbreak
$ \ket{u_1} = \;\frac{1}{2} \left( \ket{0}-\ket{1}-\ket{2}+\ket{3} \right)$,

$ \ket{u_2} = \;\frac{1}{2} \left(-\ket{0}-\ket{1}+\ket{2}+\ket{3} \right)$, 

$ \ket{u_3} = \;\frac{1}{2} \left(-\ket{0}+\ket{1}-\ket{2}+\ket{3} \right)$,

$ \ket{u_4} = \;\frac{i}{2} \left( \ket{0}+\ket{1}+\ket{2}+\ket{3} \right)$. \smallbreak
 For each measurement outcome, Bob's PMRS are found to be orthogonal and hence can be distinguished.

\end{proof}

\begin{thm} \label{set5} The following sets are distinguishable by one-way LOCC using only projective measurement: \smallbreak
\small
\begin{tabular}{|c c |}
\hline
\multicolumn{2}{|c|}{ Each of the following  $5$ sets contains the state}\\
\multicolumn{2}{|c|}{$\ket{\psi_{00}^{(4)}}$ the rest are given in the table.}\\
\hline
&  \\
$\{  \ket{\psi_{01}^{(4)}}$, $\ket{\psi_{10}^{(4)}}$, $\ket{\psi_{33}^{(4)}}\}$, &
$\{  \ket{\psi_{01}^{(4)}}$, $\ket{\psi_{11}^{(4)}}$, $\ket{\psi_{32}^{(4)}}\}$, \\
$\{  \ket{\psi_{01}^{(4)}}$, $\ket{\psi_{12}^{(4)}}$, $\ket{\psi_{31}^{(4)}}\}$, & 
$\{  \ket{\psi_{01}^{(4)}}$, $\ket{\psi_{13}^{(4)}}$, $\ket{\psi_{30}^{(4)}}\}$, \\ 
$\{ \ket{\psi_{01}^{(4)}}$, $\ket{\psi_{11}^{(4)}}$, $\ket{\psi_{30}^{(4)}}\}$.  & \\
&  \\
\hline
\end{tabular}
\end{thm}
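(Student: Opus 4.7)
The plan is to follow the one-way LOCC template already used in Theorems~\ref{set0}--\ref{set4}: Alice performs a rank-one projective measurement in a carefully chosen orthonormal basis $\{\ket{u_k}\}_{k=1}^{4}$ of her subsystem, announces the outcome $k$, and Bob then discriminates his four post-measurement reduced states by a projective measurement in the basis they form, which must therefore be pairwise orthogonal for every $k$.

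First I would parameterize Alice's basis as $\ket{u_k}=\sum_{j=0}^{3} c_j^{(k)}\ket{j}$ and write down, for each outcome $k$, Bob's four post-measurement (unnormalized) states arising from the four $\ket{\psi_{nm}^{(4)}}$ of a given set,
\[
\ket{\phi^{(B)}_{nm,k}} \;\propto\; \sum_{j=0}^{3} \overline{c_j^{(k)}}\, e^{2\pi i j n/4}\,\ket{j\oplus_4 m}_B .
\]
The orthogonality-preservation condition~\eqref{orthocondition} then reduces to a finite system (six scalar equations per outcome), one per pair in the set. For a pair with shift $(\Delta n,\Delta m)$ the equation takes the shape $\sum_j c_j^{(k)}\overline{c_{j\oplus_4 \Delta m}^{(k)}}\, e^{-2\pi i j \Delta n /4}=0$, so only the shift data of the set enter.

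Next I would exploit the common structure of the five sets. Each contains $\ket{\psi_{00}^{(4)}}$ and $\ket{\psi_{01}^{(4)}}$, and in the first four sets the remaining two states have $n$-indices in $\{1,3\}$ with $m$-indices summing to $3\pmod 4$, while the fifth set is a mild variant. Motivated by the bases in Theorems~\ref{set2}--\ref{set4}, I would try the complex-Hadamard ansatz $c_j^{(k)}=e^{i\phi_j^{(k)}}/2$: this forces $|c_j^{(k)}|^{2}=1/4$, which automatically kills every orthogonality equation of type $(\Delta n,0)$ via the $\mathbb{Z}_4$-Fourier identity, so only the pure-phase equations for $\Delta m\neq 0$ remain. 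I would then solve for phases $\phi_j^{(k)}$ drawn from $\{\pi q/8 : q\in\mathbb{Z}\}$, extending the family of bases used in the earlier theorems.

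The main obstacle is exhibiting a \emph{single} complex-Hadamard basis that discriminates all five sets simultaneously, since the configurations $(10,33)$, $(11,32)$, $(12,31)$, $(13,30)$ and $(11,30)$ each contribute a slightly different phase equation. If no unified basis exists, I would split the theorem into a small number of sub-cases grouped by the residues of $n_3+n_4$ and $m_3+m_4$ modulo $4$ and give one basis per group, as is done across Theorems~\ref{set2}--\ref{set4}. Once the basis is in hand the proof closes by direct verification: the six inner products per outcome vanish, Bob's PMRS for each outcome form an orthonormal set, and Bob measures in that basis to complete the one-way LOCC protocol.
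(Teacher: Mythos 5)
Your overall template (Alice makes a rank-one projective measurement, announces the outcome, Bob measures in the resulting orthogonal basis) is exactly the paper's, but your central ansatz fails. Write $\ket{u_k}=\sum_j c_j\ket{j}$ and, for a pair with shifts $(\Delta n,\Delta m)$, the orthogonality condition $\sum_j c_{j\oplus_4\Delta m}\,\overline{c_j}\,\omega^{j\Delta n}=0$ with $\omega=i$, as you set up. Now take the first set, $\{\ket{\psi_{00}^{(4)}},\ket{\psi_{01}^{(4)}},\ket{\psi_{10}^{(4)}},\ket{\psi_{33}^{(4)}}\}$, and let $e_j=c_{j\oplus_4 1}\overline{c_j}$. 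The pairs $(00,01)$, $(00,33)$, $(10,33)$, $(01,10)$ have $(\Delta n,\Delta m)$ equivalent to $(0,1)$, $(1,1)$, $(2,1)$, $(3,1)$ respectively, so they force \emph{all four} discrete Fourier coefficients $\sum_j e_j\omega^{js}$, $s=0,1,2,3$, to vanish, i.e.\ $e_j=0$ for every $j$. Under your complex-Hadamard ansatz $|c_j|=1/2$ one has $|e_j|=1/4\neq 0$, a contradiction. The same computation rules out the Hadamard ansatz for the sets containing $(1,1),(3,2)$, $(1,2),(3,1)$ and $(1,3),(3,0)$ as well (only the fifth set, $\{\psi_{00},\psi_{01},\psi_{11},\psi_{30}\}$, happens to leave one Fourier coefficient free and admits a Hadamard solution). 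So the equal-modulus assumption, which you invoke precisely to dispose of the $\Delta m=0$ equations, is what makes the $\Delta m=\pm1$ equations unsolvable; and your fallback of splitting into sub-cases with different Hadamard bases cannot repair this, since no basis in that family works for any one of the first four sets individually.

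The missing idea is to allow some amplitudes $c_j$ to vanish. The paper's proof uses the single basis $\frac{1}{\sqrt2}(-\ket{0}+\ket{2})$, $\frac{1}{\sqrt2}(\ket{1}+\ket{3})$, $\frac{1}{\sqrt2}(-\ket{1}+\ket{3})$, $\frac{1}{\sqrt2}(\ket{0}+\ket{2})$: each vector is supported on $\{\ket{0},\ket{2}\}$ or $\{\ket{1},\ket{3}\}$, so $c_{j\oplus_4 1}\overline{c_j}\equiv 0$ and every odd-$\Delta m$ condition holds identically; the equal moduli on the two-element support kill the $\Delta m=0$, $\Delta n=\pm1$ conditions (the five sets contain no $\Delta m=0$, $\Delta n=2$ pair), and the sign pattern disposes of the single $\Delta m=2$ pair. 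That one basis settles all five sets at once, which your ansatz could never produce. Your reduction of \eqref{orthocondition} to shift data is correct and useful; the error is only in restricting the search to unimodular rows.
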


\begin{proof}

Alice performs a rank-one projective measurement in the following orthonormal basis: \smallbreak
\small
$ \ket{u_1} = \; \frac{1}{\sqrt{2}}\left(-\ket{0}+\ket{2} \right)$,

$ \ket{u_2} = \; \frac{1}{\sqrt{2}}\left(\ket{1}+\ket{3} \right)$,

$ \ket{u_3} = \; \frac{1}{\sqrt{2}}\left(-\ket{1}+\ket{3} \right)$,

$ \ket{u_4} = \; \frac{1}{\sqrt{2}}\left( \ket{0}+\ket{2} \right)$. \smallbreak
For each measurement outcome, Bob's PMRS are found to be orthogonal and hence can be distinguished.

\end{proof}

\begin{thm} \label{set6} The following set is distinguishable by one-way LOCC using only projective measurement: \smallbreak

 $\{ \ket{\psi_{00}^{(4)}}$,  $\ket{\psi_{02}^{(4)}}$, $\ket{\psi_{20}^{(4)}}$, $\ket{\psi_{22}^{(4)}}\}$.  

\end{thm}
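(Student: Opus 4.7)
The plan is to follow the template of Theorems \ref{set1}--\ref{set5}: exhibit an explicit rank-one projective measurement $\{\ket{u_k}\bra{u_k}\}_{k=1}^{4}$ on Alice's side such that, for every outcome $k$, Bob's four post-measurement reduced states are mutually orthogonal and hence perfectly distinguishable by a further projective measurement in the matching ONB on his side.

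To find a good basis, I would exploit the subgroup structure of the labels, $(n,m)\in\{0,2\}\times\{0,2\}\subset\mathbb{Z}_4\times\mathbb{Z}_4$. Under the qubit identification $\mathbb{C}^4\cong\mathbb{C}^2\otimes\mathbb{C}^2$ given by $\ket{0}\mapsto\ket{00}$, $\ket{1}\mapsto\ket{01}$, $\ket{2}\mapsto\ket{10}$, $\ket{3}\mapsto\ket{11}$, a short direct computation from definition \ref{def1} shows that each of the four states factorises as a tensor product of two standard Bell states, one on each qubit pair:
\begin{align*}
\ket{\psi_{00}^{(4)}} &= \ket{\Phi^+}_{A_1B_1}\otimes\ket{\Phi^+}_{A_2B_2}, \\
\ket{\psi_{02}^{(4)}} &= \ket{\Psi^+}_{A_1B_1}\otimes\ket{\Phi^+}_{A_2B_2}, \\
\ket{\psi_{20}^{(4)}} &= \ket{\Phi^+}_{A_1B_1}\otimes\ket{\Phi^-}_{A_2B_2}, \\
\ket{\psi_{22}^{(4)}} &= \ket{\Psi^+}_{A_1B_1}\otimes\ket{\Phi^-}_{A_2B_2}.
\end{align*}
On pair $A_1B_1$ the pair $\{\ket{\Phi^+},\ket{\Psi^+}\}$ is discriminated by a computational-basis parity check, and on pair $A_2B_2$ the pair $\{\ket{\Phi^+},\ket{\Phi^-}\}$ by a $\{\ket{+},\ket{-}\}$-basis parity check. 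Hence Alice's measurement should be in the product basis $\ket{k}_{A_1}\otimes\ket{\pm}_{A_2}$, which in 4-dim notation reads
\begin{align*}
\ket{u_1} &= \tfrac{1}{\sqrt{2}}(\ket{0}+\ket{1}),\quad \ket{u_2}=\tfrac{1}{\sqrt{2}}(\ket{0}-\ket{1}), \\
\ket{u_3} &= \tfrac{1}{\sqrt{2}}(\ket{2}+\ket{3}),\quad \ket{u_4}=\tfrac{1}{\sqrt{2}}(\ket{2}-\ket{3}).
\end{align*}

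The only remaining step is a mechanical verification: compute $(\bra{u_k}\otimes\mathbb{1}_B)\ket{\psi_{nm}^{(4)}}$ for each $k\in\{1,2,3,4\}$ and each $(n,m)\in\{0,2\}^2$ and check that, up to a common normalisation, Bob's four resulting states for each outcome are proportional to $\ket{0}\pm\ket{1}$ and $\ket{2}\pm\ket{3}$ in some assignment, hence pairwise orthogonal. Bob completes the one-way LOCC protocol by projectively measuring in the matching ONB on his side. No real obstacle is anticipated: once the Bell-pair factorisation is noted, the problem collapses into two independent two-state Bell discriminations, each of which is elementary, and the verification parallels the checks performed in Theorems \ref{set1}--\ref{set5}.
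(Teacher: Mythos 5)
Your proposal is correct, and it reaches the result by a different, more structural route than the paper. The paper's proof simply exhibits an explicit ONB for Alice, namely $\ket{u_1}=\tfrac12(-i\ket{0}-i\ket{1}+\ket{2}+\ket{3})$ and its three companions, and then asserts (by direct computation) that for every outcome Bob's post-measurement states are orthogonal; under your qubit identification that basis is also a product basis, essentially the $Y$-eigenbasis on the first qubit tensored with the $X$-eigenbasis on the second. You instead first observe that the four states factorise as $\{\ket{\Phi^+},\ket{\Psi^+}\}_{A_1B_1}\otimes\{\ket{\Phi^+},\ket{\Phi^-}\}_{A_2B_2}$ (your factorisation is correct: $j\oplus_4 2$ flips the high bit, and the phase $(-1)^j$ depends only on the low bit), so the problem splits into two independent discriminations of two Bell states, solved by a computational-basis parity check on $A_1B_1$ and an $X$-basis parity check on $A_2B_2$; this yields the $Z\otimes X$ product basis $\tfrac{1}{\sqrt2}(\ket{0}\pm\ket{1}),\tfrac{1}{\sqrt2}(\ket{2}\pm\ket{3})$ for Alice. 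The final "mechanical verification" you defer is in fact immediate from the factorisation: for each of Alice's outcomes Bob's four states are exactly $\ket{0}_{B_1}\ket{\pm}_{B_2}$ and $\ket{1}_{B_1}\ket{\pm}_{B_2}$, i.e.\ $\tfrac{1}{\sqrt2}(\ket{0}\pm\ket{1})$ and $\tfrac{1}{\sqrt2}(\ket{2}\pm\ket{3})$, which are pairwise orthogonal. What your approach buys is conceptual transparency — it identifies this set as a product of Bell pairs (the ququad-ququad lattice-state structure the paper mentions only in its conclusion) and makes the success of one-way LOCC obvious rather than a matter of checking a table; the paper's version has the virtue of fitting the uniform template used for Theorems \ref{set1}--\ref{set5}, but gives no hint of why its basis works.
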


\begin{proof}

Alice performs a rank-one projective measurement in the following orthonormal basis: \smallbreak
$ \ket{u_1} = \;\frac{1}{2}\left(-i\ket{0}-i\ket{1}+\ket{2}+\ket{3} \right)$,

$ \ket{u_2} = \;\frac{1}{2}\left(i\ket{0}+i\ket{1}+\ket{2}+\ket{3}  \right)$,

$ \ket{u_3} = \;\frac{1}{2}\left(i\ket{0}-i\ket{1}-\ket{2}+\ket{3}  \right)$,

$ \ket{u_4} = \;\frac{1}{2}\left(-i\ket{0}+i\ket{1}-\ket{2}+\ket{3}  \right)$. \smallbreak
For each measurement outcome, Bob's PMRS are found to be orthogonal and hence can be distinguished.

\end{proof} 
\end{appendices}
\bibliographystyle{plain}
\bibliography{mybib}

\end{document}